
\documentclass[sigconf,anonymous=False]{acmart}


\usepackage{subfig}
\usepackage{multirow}
\usepackage[normalem]{ulem}
\usepackage{algorithm}
\usepackage{algorithmic}

\usepackage{amsmath,amssymb,amsfonts}
\graphicspath{{./photo/}}
\settopmatter{printfolios=true}
\useunder{\uline}{\ul}{}
\makeatletter

\newcommand{\Rmnum}[1]{\expandafter\@slowromancap\romannumeral #1@}
\makeatother

\AtBeginDocument{%
  \providecommand\BibTeX{{%
    \normalfont B\kern-0.5em{\scshape i\kern-0.25em b}\kern-0.8em\TeX}}}

\setcopyright{acmcopyright}
\copyrightyear{2018}
\acmYear{2018}
\acmDOI{XXXXXXX.XXXXXXX}

\acmConference[Conference acronym 'XX]{Make sure to enter the correct
  conference title from your rights confirmation emai}{June 03--05,
  2018}{Woodstock, NY}
%
%
\acmPrice{15.00}
\acmISBN{978-1-4503-XXXX-X/18/06}




\begin{document}

\title{Understanding and Guiding Weakly Supervised Entity Alignment with Potential Isomorphism Propagation}


\author{Yuanyi Wang}
\affiliation{%
  \institution{Beijing University of Posts and
Telecommunications}
  \city{Beijing}
  \country{China}}
\email{wangyuanyi@bupt.edu.cn}

\author{Wei Tang}
\affiliation{%
  \institution{2012 Lab, Huawei Co. LTD}
  \city{Beijing}
  \country{China}}
\email{tangocean@bupt.edu.cn}

\author{Haifeng Sun}
\affiliation{%
  \institution{Beijing University of Posts and
Telecommunications}
  \city{Beijing}
  \country{China}}
\email{hfsun@bupt.edu.cn}

\author{Zirui Zhuang}
\affiliation{%
  \institution{Beijing University of Posts and
Telecommunications}
  \city{Beijing}
  \country{China}}
\email{zhaungzirui@bupt.edu.cn}

\author{Xiaoyuan Fu}
\affiliation{%
  \institution{Beijing University of Posts and
Telecommunications}
  \city{Beijing}
  \country{China}}
\email{fuxiaoyuan@bupt.edu.cn}

\author{Jingyu Wang}
\affiliation{%
  \institution{Beijing University of Posts and
Telecommunications}
  \city{Beijing}
  \country{China}}
\email{wangjingyu@bupt.edu.cn}

\author{Qi Qi}
\affiliation{%
  \institution{Beijing University of Posts and
Telecommunications}
  \city{Beijing}
  \country{China}}
\email{qiqi8266@bupt.edu.cn}

\author{Jianxin Liao}
\affiliation{%
  \institution{Beijing University of Posts and
Telecommunications}
  \city{Beijing}
  \country{China}}
\email{jxlbupt@gmail.com}

\renewcommand{\shortauthors}{Trovato and Tobin, et al.}

\begin{abstract}
Weakly Supervised Entity Alignment (EA) is the task of identifying equivalent entities across diverse knowledge graphs (KGs) using only a limited number of seed alignments. Despite substantial advances in aggregation-based weakly supervised EA, the underlying mechanisms in this setting remain unexplored. In this paper, we present a propagation perspective to analyze weakly supervised EA and explain the existing aggregation-based EA models. Our theoretical analysis reveals that these models essentially seek propagation operators for pairwise entity similarities. We further prove that, despite the structural heterogeneity across different KGs, the potentially aligned entities within aggregation-based EA models exhibit isomorphic subgraphs, a fundamental yet underexplored premise of EA. Leveraging this insight, we introduce a potential isomorphism propagation operator to enhance the propagation of neighborhood information across KGs. We develop a general EA framework, PipEA, incorporating this operator to improve the accuracy of every type of aggregation-based model without altering the learning process. Extensive experiments substantiate our theoretical findings and demonstrate PipEA's significant performance gains over state-of-the-art weakly supervised EA methods. Our work not only advances the field but also enhances our comprehension of aggregation-based weakly supervised EA.

\end{abstract}

\begin{CCSXML}
<ccs2012>
 <concept>
  <concept_id>00000000.0000000.0000000</concept_id>
  <concept_desc> Computing methodologies, Knowledge representation
and reasoning</concept_desc>
  <concept_significance>500</concept_significance>
 </concept>
</ccs2012>
\end{CCSXML}

\ccsdesc[500]{Computing methodologies~Neural networks}
\ccsdesc[500]{Information systems~Information integration}

\renewcommand\footnotetextcopyrightpermission[1]{}
\settopmatter{printacmref=false} 

\keywords{Entity Alignment, Knowledge Graphs, Weakly Supervised Learning, Propagation Operator, Potential Isomorphism Propagation}


\maketitle

\section{Introduction}
\label{section:1}

\begin{figure}[t]
\centering
\includegraphics[width = \linewidth]{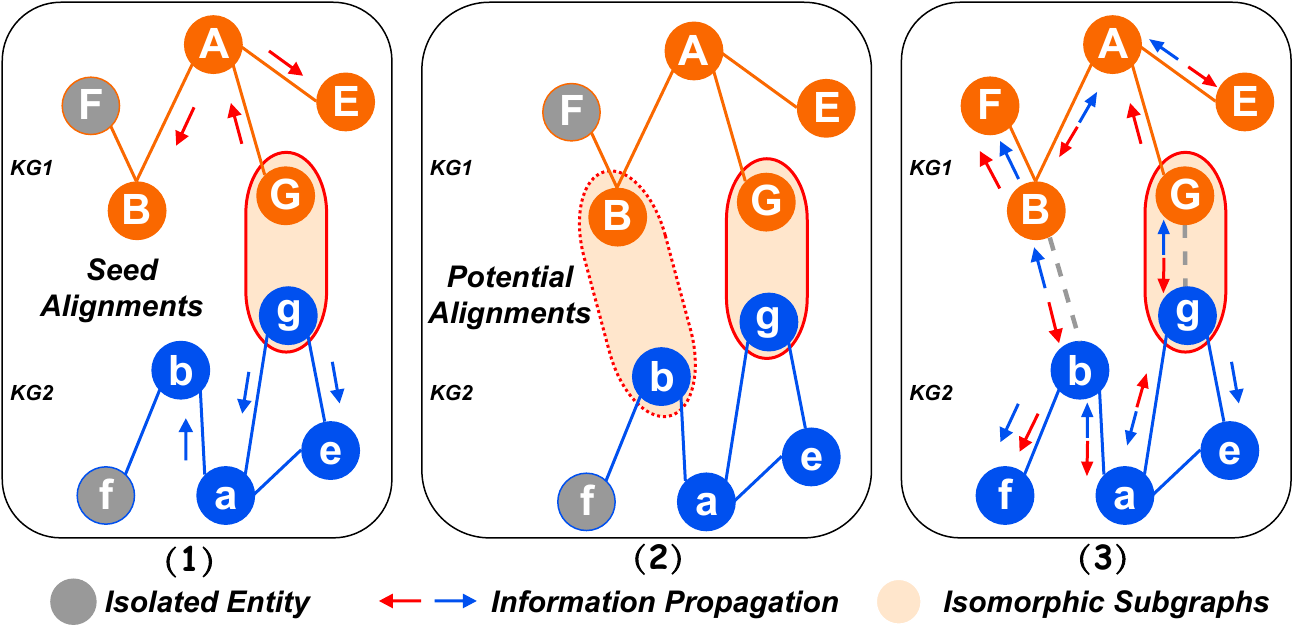}
\caption{(1) Propagation gap in weakly supervised settings and (3) our potential isomorphism propagation.} 
\label{figI}
\end{figure}

Knowledge Graphs (KGs) have emerged as pivotal resources across diverse domains, such as information retrieval \cite{yu2023retrieval}, question answering \cite{chakrabarti2022deep}, and recommendation systems \cite{wang2023mixed}. Despite their growing importance, KGs suffer from limitations in coverage, constraining their utility in downstream applications. The integration of heterogeneous KGs presents a significant challenge, at the core of which lies Entity Alignment (EA). EA aims to identify corresponding entities across different KGs. Contemporary EA solutions, particularly aggregation-based models, adhere to established pipelines. They rely on abundant seed alignments as supervised signals to learn entity representations, projecting diverse KGs into a unified embedding space, and subsequently predicting alignment results using these unified embeddings.

However, these methods heavily hinge on the availability of substantial seed alignments, which can be unrealistic or expensive to obtain. This has led to increased interest in weakly supervised EA, a scenario where only a limited number of seed alignments are accessible. Recent studies  \cite{ref_article15, ref_article16, ref_article33} have explored strategies like active learning and additional information incorporation to enhance the performance of aggregation-based EA models, which are state-of-the-art weakly supervised EA methods. While these efforts have shown some promise, it remains unclear why the adaptation of aggregation-based EA models for effective information propagation in weakly supervised settings is challenging.

This paper delves into Weakly Supervised EA (WSEA), focusing on how to enhance aggregation-based EA models in weakly supervised settings through the perspective of information propagation. We unveil that potentially aligned entities exhibit isomorphic subgraphs. We analyze the limitations of existing models in weakly supervised contexts, noting their reliance on local structural information and iterative neighbor aggregation. These models aim to minimize the distances between output embeddings of seed alignments, as further elaborated in Section \ref{analysis}. Figure \ref{figI} (1) depicts the key challenge: reliance on a sufficient quantity of seed alignments like (G,g). Limited seed alignments result in many unlabeled entities can not utilize the prior knowledge from labeled ones, hampering the propagation of alignment information due to restricted aggregation steps. For instance, in a two-layer GNN, only entities within a two-hop radius (like A, B) of seed entities participate in aggregation-based propagation, leaving distant entities isolated.

While the concept of employing higher-order aggregation-based models to expand neighborhood propagation has been considered, empirical studies \cite{PEEA} show their limited effectiveness in weakly supervised settings. Furthermore, research \cite{gasteiger2018predict} establishes a relationship between aggregation-based models and random walks. It demonstrates that, as the number of layers increases, these models converge towards a limiting distribution, which resembles that generated by random walks. Consequently, their performance deteriorates significantly with a high number of layers.

In light of these challenges, we conduct a theoretical analysis that unveils the learning process of aggregation-based models as a seek for propagation operators of pairwise entity similarities. Based on this insight, we establish a key theoretical result: \textit{potentially aligned entities in aggregation-based EA models possess isomorphic subgraphs, enabling the propagation of neighborhood information through these isomorphic subgraphs.} For example, as shown in Fig \ref{figI} right, potentially aligned entities (B,b) share isomorphic subgraphs enabling the propagation of neighborhood information through these isomorphic subgraphs. Leveraging this insight, we introduce \textbf{P}otential \textbf{i}somorphism \textbf{p}ropagation \textbf{E}ntity \textbf{A}lignment (PipEA), a general aggregation-based EA framework specifically designed to bridge the propagation gap in weakly supervised settings. PipEA constructs a propagation operator with two components: intra-graph propagation based on the original single-graph connectivity and inter-graph propagation grounded in potential alignment results represented by similarity matrices. This operator facilitates the generation of a new similarity matrix. We further propose a refinement scheme to better fuse the new and original similarity matrix. To reduce the complexity, we adopt randomized low-rank SVD \cite{ref_article41} and the sinkhorn operator \cite{cuturi2013sinkhorn}. Extensive experiments demonstrate PipEA's effectiveness, not only in weakly supervised settings but also in some normal supervised scenarios. In particular, our framework even improves the most dominant metric Hit@1 by nearly two times compared to the original model and also achieves state-of-the-art on both cross-lingual and mono-lingual datasets.


Our main contributions are summarized as follows:
\begin{itemize}
\item \textbf{Theoretical Analysis:}
We provide a theoretical analysis of aggregation-based EA models, revealing their reliance on propagation operators for deriving pairwise entity similarities. We also prove the existence of isomorphic subgraphs in potentially aligned entities, a fundamental yet underexplored aspect in EA.

\item \textbf{Isomorphism Propagation Operator:}
We introduce the novel propagation operator, which enables cross-graph information propagation, and prove its convergence.

\item \textbf{Innovative EA Framework:}
We propose PipEA, a theoretically grounded framework designed to enhance weakly supervised EA by facilitating neighborhood information propagation across heterogeneous graphs, the first in the field to our knowledge.

\item \textbf{Extensive Experiments:}
Experimental results validate our theoretical analysis and indicate our method achieves state-of-the-art on public datasets.
The code is available at \url{https://github.com/wyy-code/PipEA}.
\end{itemize}

\section{Preliminary}


\subsection{Problem Definition}
\noindent

\newtheorem{myDef}{\noindent\bf Definition}
\begin{myDef}
\textbf{A Knowledge Graph (KG)}, denoted as $G = (\mathcal{E}, \mathcal{R}, \mathcal{T})$, comprises a set of entities $\mathcal{E}$, a set of relations $\mathcal{R}$, and a set of triples $\mathcal{T} = \{ (h, r, t) \,|\, h, t \in \mathcal{E}, r \in \mathcal{R} \}$. Each triple represents an edge from the head entity $h$ to the tail entity $t$ with the relation $r$.
\end{myDef}
\noindent
\begin{myDef}
\textbf{WSEA task} seeks a one-to-one mapping $\Phi = \{(e_i, e_i^\prime) | e_i \in \mathcal{E}_s, e_i^\prime \in \mathcal{E}_t, e_i \equiv e_i^\prime\}$ from the source graph $G_s = (\mathcal{E}_s, \mathcal{R}_s, \mathcal{T}_s)$ to the target graph $G_t = (\mathcal{E}_t, \mathcal{R}_t, \mathcal{T}_t)$, relying on only limited seed alignments $\Phi^\prime \in \Phi$. Here, $\equiv$ represents an equivalence relation between $e_i$ and $e_i^\prime$.
\end{myDef}

The notations used are summarized in the Appendix \ref{appendix:notation}.

\subsection{Related Work}
\noindent
\textbf{Aggregation-based EA.}
The adoption of aggregation-based models, featuring graph neural networks (GNNs), has gained significant traction in the domain of EA \cite{ref_article9,ref_article10,ref_article11}. These models harness the power of GNNs to generate entity representations by aggregating information from neighboring entities \cite{GNNbased}. Diverse GNN-based variants, such as RDGCN\cite{RDGCN}, RNM\cite{RNM}, KEGCN\cite{KECG}, MRAEA\cite{ref_article24}, and RREA\cite{ref_article25}, have emerged to address the capture of structural information and neighborhood heterogeneity. Some of these models focus on optimizing the proximity of positive entity pairs (e.g., PSR\cite{ref_article38}, Dual-AMN\cite{ref_article29}) or the distance between negative pairs (e.g., SEA\cite{SEA}, TEA\cite{TEA}). Furthermore, attribute-enhanced techniques incorporate entity attributes such as names and textual descriptions \cite{AttrE, COTSAE, AttrGNN} to enhance entity embeddings. Notably, ACK \cite{ACK} constructs an attribute-consistent graph to mitigate contextual gaps. Our work contributes by shedding light on these models' underlying principles, revealing their quest for pairwise similarity propagation operators, and justifying the existence of isomorphic subgraphs within potentially aligned entities. Additionally, we introduce a novel method designed to augment the performance of aggregation-based models.

\noindent
\textbf{Weakly Supervised EA.}
In scenarios with limited labeled data, many EA models suffer a drastic decline in alignment accuracy \cite{EAsurvey, wang2024towards, zhao2023weakly}. Some studies, such as ALEA \cite{ref_article15}, ActiveEA \cite{ref_article16}, and RCL, have explored reinforced active learning and the application of diverse heuristics to improve generalization \cite{ref_article32}. Contrastive learning approaches, exemplified by RAC, leverage unlabeled data as supervision signals \cite{zhao2023weakly}. Meanwhile, methods like PEEA employ anchor-aware positioning to effectively capture long dependencies, setting new benchmarks \cite{PEEA}. Despite these advancements, these methods do not address the propagation gap of the original models under this setting. Our method introduces a unique angle by focusing on cross-graph information propagation, providing a complementary enhancement to existing weakly supervised EA techniques without modifying the foundational model structure.

\section{Theoretical Analysis}
\label{analysis}
This section delves into the theoretical foundation of aggregation-based EA models and our proposed propagation operator, potential isomorphism propagation. The organization is as follows:

\begin{enumerate}
\item \textbf{Aggregation-based EA} introduces basic aggregation-based models, establishing the foundation for our Proposition \ref{pro1}.
\item \textbf{Propagation Operators} discusses the propagation operators, which are central to the aggregation-based models.
\item \textbf{Analysis of Aggregation-based EA} reveals that these models essentially seek propagation operators for pairwise entity similarities. Then it proves that the potentially aligned entities within aggregation-based EA models have isomorphic subgraphs.
\item \textbf{Potential Isomorphism Propagation}  proposes the novel theoretical propagation operator to facilitate information propagation cross graph and prove its convergence.
\end{enumerate}

\subsection{Aggregation-based EA}
In aggregation-based EA models, entities are initially represented by aggregating their neighbors in the unified space. For simplicity, we consider a one-layer GCN  \cite{GCN} with mean-pooling as the aggregation function. The entity embedding in GCNs and the primary objective of alignment learning can be expressed as follows:
\begin{equation}
\label{GCNembedding}
\mathbf{e}=\frac{1}{|N(e)|}{\sum_{e^\prime \in N(e)}} \mathbf{e^\prime}
\end{equation}
\begin{equation}
\label{obj}
\min\limits_{(\mathbf{e}_s, \mathbf{e}_t)\in \Phi} \mathrm{d}(\mathbf{e}_s, \mathbf{e}_t)
\end{equation}
where $\mathrm{d}(\cdot)$ represents a distance measure. The objective is to minimize the embedding distance between identical entities in seed alignments. While negative sampling methods aim to generate dissimilar entity pairs and train to distinguish the embeddings of dissimilar entities, Eq. \ref{obj} is the fundamental and commonly employed learning objective, which is the focus of our analysis.

\subsection{Propagation Operators}
The propagation operator is the core of propagation algorithms, governing information or influence spread in graphs. It is represented as a matrix or mathematical function. In general, propagation algorithms can be expressed as:
\begin{equation}
\label{propagation}
\pi_{u}(v) = \pi_{u}(v) P
\end{equation}
$\pi$ is used to signify the transition probability between entities. One example is personalized PageRank (PPR)  \cite{pagerank}, where the propagation operator $P$ often takes the form $P = D^{-1}A$. Here, $D$ is a diagonal matrix, with $D(i,j)$ representing entity $i$'s out-degree (for directed) or degree (for undirected graphs), and $A$ is the adjacency matrix. PPR $\pi_{u}(v)$ for entity $v$ regarding node $u$ quantifies the probability of a random walk with an $\alpha$ discount initiated from $u$ ending at $v$, with $\alpha$ denoting the probability of stopping at the current entity and $(1 - \alpha)$ the probability of transitioning to a random out-neighbor. PPR propagation can be expressed as:
\begin{equation}
\pi_{PPR} = \sum_{\ell=0}^{\infty} \alpha (1-\alpha)^\ell P^\ell
\end{equation}

\subsection{Analysis of Aggregation-based EA}
\label{Analysis of Aggregation-based EA}
\newtheorem{myTh}{\bf Proposition}[section]
\newtheorem{myTheo}{Theorem}[section]

The Aggregation-based EA model is commonly evaluated on heterogeneous graphs  \cite{ref_article26, uncertainty,xin2022large,ref_article18}. Entity similarities are computed using embeddings from aggregation-based models, denoted as $(x_1, x_2, \ldots, x_n)$ and $(y_1, y_2, \ldots, y_m)$, with $n=|\mathcal{E}_s|$ and $m=|\mathcal{E}_t|$. The pairwise similarity matrix $\Omega(i,j)$, crucial for entity alignment, is defined as:
\begin{equation}
\Omega(i,j) = (x_1; x_2; \ldots ; x_n)^\top(y_1; y_2; \ldots ; y_m)\in \mathbb{R}^{n\times m}
\end{equation}
Similar to most EA settings, we assume that each entity aligns with at most one entity in another KG.

\begin{myTh}
\label{pro1}
    In the Aggregation-based EA model, the primary objective is to derive a propagation operator governing pairwise entity similarities via embedding learning.
\end{myTh}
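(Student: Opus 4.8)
The plan is to collapse the two governing equations—the mean-pooling aggregation of Eq.~\ref{GCNembedding} and the alignment objective of Eq.~\ref{obj}—into a single update rule that acts directly on the pairwise similarity matrix $\Omega$, and then to read off the claimed propagation operator from that rule. First I would rewrite one layer of aggregation in matrix form. Stacking the source embeddings as the columns of $X \in \mathbb{R}^{d\times n}$ and the target embeddings as the columns of $Y \in \mathbb{R}^{d\times m}$, the per-entity update $\mathbf{e}=\frac{1}{|N(e)|}\sum_{e'\in N(e)}\mathbf{e'}$ becomes $X \mapsto X P_s^\top$ and $Y \mapsto Y P_t^\top$, where $P_s = D_s^{-1}A_s$ and $P_t = D_t^{-1}A_t$ are exactly the row-normalized operators of the two graphs, identical in form to the PPR propagation operator $P=D^{-1}A$ from Eq.~\ref{propagation}. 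This step identifies GCN aggregation as a propagation step on the embeddings themselves.

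Next I would substitute these matrix-form updates into the definition $\Omega = X^\top Y$. A single aggregation layer then transforms the similarity matrix as $\Omega \mapsto (X P_s^\top)^\top (Y P_t^\top) = P_s\,\Omega\,P_t^\top$, and $k$ stacked layers yield $\Omega^{(k)} = P_s^{\,k}\,\Omega^{(0)}\,(P_t^{\,k})^\top$. This is the crux of the argument: the pairwise similarities evolve under a two-sided operator assembled entirely from $P_s$ and $P_t$, so $\Omega$ is literally the output of a propagation process, in the same sense that PPR spreads probability mass through repeated application of $P$.

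It then remains to tie this dynamics back to the learning objective. I would rewrite Eq.~\ref{obj} in terms of $\Omega$: under the customary unit-norm embeddings, the Euclidean distance satisfies $\mathrm{d}(\mathbf{e}_s,\mathbf{e}_t)^2 = 2 - 2\,\Omega(i,i')$ for a seed pair $(e_i,e_{i'})\in\Phi'$, so minimizing seed distances is equivalent to driving the $\Phi'$ entries of $\Omega$ toward their maximum. Because the embeddings influence alignment only through $\Omega$, and $\Omega$ is generated solely by the propagation above, embedding learning amounts to shaping the propagation operator so that its output similarity matrix is consistent with the seed alignments—which is precisely the assertion that the primary objective is to derive a propagation operator governing pairwise entity similarities.

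The step I expect to be the main obstacle is the last one. The matrix algebra of the first two paragraphs is routine, but the genuine content of the statement is the interpretive equivalence between optimizing over embeddings and optimizing over the propagation operator. Making this airtight requires arguing that no alignment-relevant information about the embeddings survives outside $\Omega$, so that the embedding-space objective factors through the $\Omega$-space propagation; I would justify this by observing that $\Omega$ is the sole quantity consumed by the alignment decision and that its layer-wise dynamics are closed under $P_s$ and $P_t$, leaving the propagation operator as the only degree of freedom that actually affects the learned similarities.
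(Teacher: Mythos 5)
Your proposal is correct and follows essentially the same route as the paper: the paper's Appendix proof writes the mean-pooling weights element-wise as $\lambda_{i,j}=\mathcal{B}_{(x_i,x_j)\in\mathcal{N}}/|\mathcal{N}_{x_i}|$ (i.e.\ exactly your $D^{-1}A$), shows each row of $\Lambda$ sums to $1$, and derives $\Lambda^s\Omega(\Lambda^t)^\top=\Omega$, which is precisely your matrix-form update $\Omega\mapsto P_s\,\Omega\,P_t^\top$. Your additional final step relating Eq.~\ref{obj} to maximizing the seed entries of $\Omega$ via $\mathrm{d}(\mathbf{e}_s,\mathbf{e}_t)^2=2-2\,\Omega(i,i')$ is a small but welcome elaboration that the paper leaves implicit.
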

\begin{proof}
Please refer to Appendix \ref{ProofofProposition1}.
\end{proof}

Proposition \ref{pro1} clarifies that the Aggregation-based EA model seeks a propagation operator operating on the similarity matrix. Recent studies employ such kind operators on proximity matrices \cite{AROPE, NRP, STRAP}, representing entity proximity within a single graph. It suggests that the similarity matrix can also be seen as a specialized proximity matrix, where $\Omega (i, j)$ measures the proximity between entity $i$ and entity $j$ in another graph within a unified space. This implies that $\Omega (i, j)$ captures potential structural information between cross-graph entities in the unified space. Moreover, the propagation operator derived from the learning process unveils the structure of subgraphs for potentially aligned entity pairs within this unified space. Therefore, we provide the following proposition.

\begin{myTh}
\label{pro2}
    Let $\Lambda = [\lambda_1, \ldots, \lambda_n] \in \mathcal{R}^{n \times n}$ represent a matrix comprising $n$ arbitrary orthonormal vectors, signifying the propagation operator derived from aggregation-based EA models. Then potentially aligned entities have isomorphic subgraphs, and it follows that $Pr(rank(\Lambda\Lambda^\top)=n)= 1$.
\end{myTh}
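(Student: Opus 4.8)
The plan is to split the statement into its two coupled claims: the concrete linear-algebra fact that $\mathrm{rank}(\Lambda\Lambda^\top)=n$ almost surely, and the structural claim that potentially aligned entities carry isomorphic subgraphs. I would dispatch the rank claim first, since it is the firm anchor, and then exploit the resulting invertibility of $\Lambda$ to obtain the isomorphism.

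For the rank claim, I would begin from the Gram-matrix identity $\mathrm{rank}(\Lambda\Lambda^\top)=\mathrm{rank}(\Lambda)$, which holds because $\Lambda\Lambda^\top$ and $\Lambda$ share a column space and, via $x^\top\Lambda\Lambda^\top x=\|\Lambda^\top x\|^2$, satisfy $\ker(\Lambda\Lambda^\top)=\ker(\Lambda^\top)$. Since the columns $\lambda_1,\ldots,\lambda_n$ are orthonormal they are in particular linearly independent, so $\mathrm{rank}(\Lambda)=n$ and hence $\mathrm{rank}(\Lambda\Lambda^\top)=n$. The probability-$1$ phrasing reflects that $\Lambda$ is \emph{learned}: among all $n\times n$ real matrices, the configurations that fail full rank form the zero set of the determinant, a nonzero polynomial, and therefore have Lebesgue measure zero. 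Drawing the operator from any continuous initialization (e.g.\ Gaussian entries subsequently orthonormalized) thus yields full rank almost surely, giving $Pr(\mathrm{rank}(\Lambda\Lambda^\top)=n)=1$.

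For the isomorphism claim I would invoke Proposition \ref{pro1}, by which the EA objective is equivalent to finding a propagation operator acting on the pairwise similarity matrix $\Omega$; taking $\Lambda$ to be this operator, the full rank just established makes $\Lambda$ orthogonal ($\Lambda\Lambda^\top=I_n$) and hence invertible, so the induced map on neighborhood-similarity profiles is a bijection. I would then argue that this bijection is adjacency-preserving: by the aggregation rule in Eq.~\ref{GCNembedding} an entity's embedding encodes its one-hop neighborhood, while the objective in Eq.~\ref{obj} forces potentially aligned entities to share the same aggregated neighborhood representation up to the orthogonal change of basis realized by $\Lambda$. A vertex bijection between the two local subgraphs that preserves adjacency is exactly a graph isomorphism, so the subgraphs surrounding potentially aligned entities are isomorphic.

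The main obstacle will be this second claim rather than the first: the rank statement is a routine Gram-matrix computation, but passing from ``$\Lambda$ is an invertible orthogonal operator on similarities'' to ``the associated subgraphs are graph-isomorphic'' requires pinning down precisely which subgraph is attached to each entity and verifying that $\Lambda$ descends to an adjacency-preserving vertex bijection rather than a mere inner-product-preserving embedding map. I expect this to hinge on reading $\Lambda\Lambda^\top=I_n$ as supplying both the forward correspondence and its inverse between the two neighborhoods, so that no neighbor is created or destroyed under the alignment — the hallmark of an isomorphism — while separately controlling degenerate cases such as empty neighborhoods or non-generic similarity profiles.
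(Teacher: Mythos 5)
Your treatment of the rank claim is correct and, in its first half, more elementary than the paper's: you observe that orthonormal columns already force $\mathrm{rank}(\Lambda)=n$ deterministically via $\ker(\Lambda\Lambda^\top)=\ker(\Lambda^\top)$, and you then supply the same measure-zero argument the paper uses (the paper works with the polynomial $g(Y)=\det(Y^\top Y)$ for $Y=\Lambda\Lambda^\top$, notes $g$ is not identically zero since $g(Q)=1$ for orthogonal $Q$, and integrates the density over the zero-level set to get probability zero). One small slip: you write that ``the full rank just established makes $\Lambda$ orthogonal ($\Lambda\Lambda^\top=I_n$)''; full rank only gives invertibility, and the orthogonality comes from the hypothesis that the columns are orthonormal (so $\Lambda^\top\Lambda=I_n$, hence $\Lambda\Lambda^\top=I_n$ for square $\Lambda$), not from the rank computation.

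The genuine gap is in the second claim, and it is the one you yourself flag. Your plan is to read the invertible operator as a bijection on neighborhood-similarity profiles and then argue it is adjacency-preserving because Eq.~\ref{obj} forces aligned entities to share the same aggregated representation under Eq.~\ref{GCNembedding}. That last step does not go through as stated: mean-pooling is lossy, so two entities can have identical aggregated embeddings while their neighbor multisets — and hence their local subgraphs — differ, and an inner-product-preserving map on embeddings need not descend to a vertex bijection that preserves edges. So the passage from ``bijection on similarity profiles'' to ``graph isomorphism'' is exactly the missing content, and it cannot be extracted from Eq.~\ref{GCNembedding}--\ref{obj} alone. The paper closes this step differently: it routes through the fixed-point identity $\Lambda^s\Omega(\Lambda^t)^\top=\Omega$ established in the proof of Proposition \ref{pro1}, invokes Theorem 3.4 of \cite{EASIM} to treat $\Omega$ as edge weights in a mathematical graph-matching problem whose solutions are fixed points of the alignment mapping, and then uses the just-proved fact $\mathrm{rank}(\Lambda\Lambda^\top)=n$ to guarantee that every entity admits a match in the other graph, from which the isomorphic-subgraph conclusion is drawn. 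If you want to complete your version without importing that external theorem, you would need to state precisely which subgraph is attached to each entity and prove that the one-to-one correspondence induced by the row-stochastic propagation weights $\lambda_{i,j}$ (which encode actual adjacency, not just embedding similarity) is preserved under the invariance $\Lambda^s\Omega(\Lambda^t)^\top=\Omega$ — that is the structure the paper leans on and your sketch currently lacks.
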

\begin{proof}
Please refer to Appendix \ref{ProofofProposition2}.
\end{proof}

Since EA seeks a one-to-one mapping, the resulting similarity matrix must be orthonormal. When the similarity matrix $\Omega$ is considered as $\Lambda$, it encodes information about potentially aligned entities exhibiting isomorphic subgraphs, forming the basis for the convergence of the subsequent operator.

\subsection{Potential Isomorphism Propagation}
\label{PIP}

 This foundational observation underpins our introduction of the potential isomorphism propagation operator, which strategically utilizes the similarity matrix to control information propagation across entities through isomorphic subgraphs within diverse graphs. Different from existing operators, like PPR, that only propagate information within the graph, this propagation operator integrates both inter-graph and intra-graph propagation. The inter-graph propagation between potentially aligned entities across graph is governed by the similarity matrix, while intra-graph propagation relies on the graph's connectivity represented as $D^{-1}_\gamma A_\gamma$, where $D_\gamma$ and $A_\gamma$ are degree and adjacency matrices for graph $\mathcal{G}_\gamma,\gamma=\{s,t\}$. We propose the Potential Isomorphism Propagation operator $\Lambda_3$ as follows:
\begin{myTh}
\label{pro3}
    Let $\Lambda_1 = [D_s^{-1}A_s, \Omega] \in \mathbb{R}^{n \times (n+m)}$, $\Lambda_2 = [\Omega^\top, D_t^{-1}A_t] \in \mathbb{R}^{m \times (n+m)}$, and $\Lambda_3 = [\Lambda_1; \Lambda_2] \in \mathbb{R}^{(n+m) \times (n+m)}$. Consider $\Lambda_3 \in \mathbb{R}^{(n+m) \times (n+m)}$ as a symmetric graph operator with $\lambda_1, \ldots, \lambda_{d}$ as its $d$ dominant eigenvalues (in decreasing order of magnitude), where $|\lambda_i| > |\lambda_{i+1}|$, $1 \leq i \leq d$. Then the $\Lambda_3 \Lambda_3^\top$ converges to the ${d}$-dominant eigenvectors.
\end{myTh}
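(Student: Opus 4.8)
The plan is to reduce the statement to the classical convergence of power (orthogonal) iteration for a symmetric operator with a spectral gap. First I would observe that $M := \Lambda_3 \Lambda_3^\top$ is symmetric positive semidefinite by construction, so by the spectral theorem it admits a real orthonormal eigenbasis $u_1, \ldots, u_{n+m}$ with nonnegative eigenvalues $\mu_1 \ge \cdots \ge \mu_{n+m} \ge 0$. When $\Lambda_3$ is itself symmetric, as the hypothesis posits, we have $M = \Lambda_3^2$, so $\mu_i = \lambda_i^2$ and the $u_i$ are exactly the eigenvectors of $\Lambda_3$; the stated gap $|\lambda_d| > |\lambda_{d+1}|$ then yields the strict gap $\mu_d > \mu_{d+1}$ that drives convergence. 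Consequently the dominant $d$-dimensional eigenspace of $M$ is exactly $\mathcal{U}_d := \mathrm{span}(u_1, \ldots, u_d)$, and the relevant contraction ratio is $(|\lambda_{d+1}|/|\lambda_d|)^2 < 1$.

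Next I would set up the iteration explicitly. Starting from a generic orthonormal $Q_0 \in \mathbb{R}^{(n+m)\times d}$, define $Q_{k+1}$ via the QR factorization of $M Q_k$, and expand $Q_0$ in the eigenbasis. Splitting the coordinates into the top-$d$ block and its complement, repeated multiplication by $M = U \Sigma^2 U^\top$ scales the two blocks by $\Sigma_{\mathrm{top}}^{2k}$ and $\Sigma_{\mathrm{bot}}^{2k}$ respectively, so that the ratio of their operator norms is bounded by $(|\lambda_{d+1}|/|\lambda_d|)^{2k}$. I would then bound the tangent of the largest principal angle $\theta_k$ between $\mathrm{range}(Q_k)$ and $\mathcal{U}_d$ by $C\,(|\lambda_{d+1}|/|\lambda_d|)^{2k}$, so that $\tan\theta_k \to 0$ geometrically and $\mathrm{range}(Q_k) \to \mathcal{U}_d$; that is, the iteration of $\Lambda_3 \Lambda_3^\top$ converges to the $d$-dominant eigenvectors.

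The finiteness of the constant $C$ rests on the genericity of $Q_0$: I would argue that the projection of $\mathrm{range}(Q_0)$ onto $\mathcal{U}_d$ is itself $d$-dimensional (equivalently, the top-$d$ coordinate block of $Q_0$ is invertible), which holds with probability $1$. This is precisely where Proposition \ref{pro2} enters: since the embedded similarity block $\Omega$ is orthonormal and $Pr(\mathrm{rank}(\Lambda\Lambda^\top)=n)=1$, the dominant subspace is full-dimensional almost surely and the non-degeneracy required for the principal-angle bound is met.

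The main obstacle I anticipate is twofold. The delicate quantitative step is the principal-angle contraction bound, which must separate the top and bottom spectral blocks while correctly accounting for the QR renormalization at each step rather than merely tracking $M^k$ directly; the standard subspace-iteration argument handles this, but the bookkeeping has to respect the non-orthogonality of the iterates before normalization. A second, more conceptual point is reconciling the hypothesis that $\Lambda_3$ is symmetric with the fact that the intra-graph blocks $D_s^{-1}A_s$ and $D_t^{-1}A_t$ are random-walk operators, which are not symmetric in general. I would resolve this by working with the symmetric normalization $D^{-1/2}AD^{-1/2}$, which shares the spectrum of $D^{-1}A$ through a similarity transform, so that the symmetric-operator hypothesis is legitimate and the spectral-theorem argument applies verbatim.
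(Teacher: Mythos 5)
Your proposal is correct and follows essentially the same route as the paper: the paper invokes the classical subspace-iteration convergence theorem (Theorem~5.1 of Saad) for a symmetric operator with a spectral gap and verifies its non-degeneracy hypothesis --- that the spectral projector onto the dominant eigenspace applied to the initial block has full rank --- almost surely via Proposition~\ref{pro2}, which is exactly the role you assign to Proposition~\ref{pro2} in your genericity step. The only differences are that you unfold the principal-angle contraction argument that the paper cites as a black box, and you add a useful remark (which the paper glosses over) on reconciling the symmetry hypothesis with the non-symmetric blocks $D^{-1}A$ via the similarity-equivalent normalization $D^{-1/2}AD^{-1/2}$.
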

\begin{proof}
Please refer to Appendix \ref{ProofofProposition3}.
\end{proof}
This proposition proves the existence of our proposed propagation operator, $\Lambda_3$, by proving its convergence.

\begin{figure}[t]
\centering
\includegraphics[width = \linewidth]{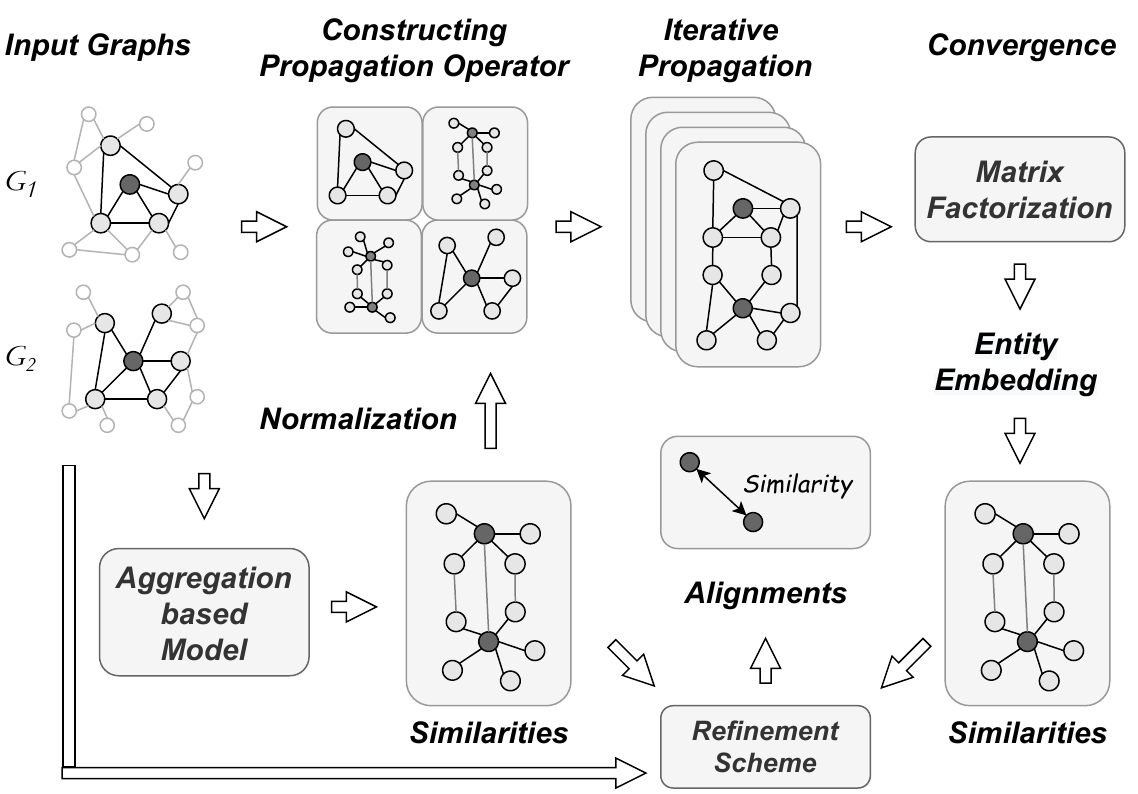}
\caption{PipEA initiates with pairwise similarities from aggregation-based models, employs potential isomorphism propagation to generate new similarities, and introduces a refinement scheme to effectively integrate them.} \label{figM}
\end{figure}

\section{Method}
\label{method}
This section presents the methodology of PipEA, a novel framework designed to enhance weakly supervised entity alignment through potential isomorphism propagation. Subsections are organized to sequentially detail each phase of our approach:

\begin{enumerate}
\item \textbf{Generating Initial Similarities:} This phase details the initial steps, including the training of encoders that solely rely on the structure and the initial similarities generation.
\item \textbf{Isomorphism Propagation:} At the core of PipEA, this phase leverages the principles established in Proposition \ref{pro3} to make potential isomorphism propagation across graphs. It includes the \textit{construction of the propagation operator}, \textit{propagation strategies}, \textit{matrix factorization}, and \textit{refinement schemes}.
\item \textbf{Predicting Alignments:} This phase redefines the one-to-one alignment prediction challenge as an assignment problem, optimizing the alignment process.
\item \textbf{Reducing Complexity:} Strategies to reduce complexity are implemented, enhancing the efficiency of the overall process.
\end{enumerate}
The details of PipEA are provided in Fig.~\ref{figM} and Algorithm \ref{AppAlgorithm}.

\subsection{Generating Initial Similarities}
PipEA initiates with the generation of initial pairwise similarities $\Omega_0$, utilizing aggregation-based EA models that focus on structural information, which is the most prominent information for EA \cite{EAsurvey}. We employ two leading EA models, DualAMN and PEEA, which are state-of-the-art (SOTA) for normal and weakly supervised EA, respectively. DualAMN enhances alignment accuracy through proxy matching and hard negative sampling within GCNs. In contrast, PEEA excels in structure-only scenarios by leveraging anchor positioning for effective dependency mapping. The similarities generation process is formalized as
\begin{equation}
    \Omega_0 = Encoder(A_s,A_t)
\end{equation}
where the $Encoder(\cdot)$ are the EA models to generate entity embedding and then calculate the pairwise similarities, and the inputs are only the structure data like adjacency matrix $A_s, A_t$ of KGs.

\subsection{Isomorphism Propagation}
\label{Propagation Scheme}
\subsubsection{Construction of Propagation Operator.}
Our propagation operator combines intra-graph and inter-graph propagation. Intra-graph propagation is based on the normalized adjacency matrix $D^{-1}A$ of the single-graph structure, while inter-graph propagation uses the similarity matrix $\Omega_0 \in \mathbb{R}^{n\times m}$. The operator is defined as:
\begin{equation}
\label{construction process}
    \Lambda_3=\begin{vmatrix} 
    \beta \cdot D_s^{-1}A_s & (1-\beta ) \cdot \mathcal{N}^{l_2}(\Omega_0)  \\
    (1-\beta ) \cdot \mathcal{N}^{l_2}(\Omega^\top_0) & \beta \cdot D_t^{-1}A_t
    \end{vmatrix}
    \in{\mathbb{R}^{(n+m)\times(n+m)}}
\end{equation}
Here, the parameter $\beta$ balances intra-graph and inter-graph propagation. Notably, we focus on similarities between highly confident potentially aligned pairs using a normalization operation denoted as $\mathcal{N}^{l_2}(\cdot)$, which is applied on rows as follows:
\begin{equation}
\label{normalization operations}
    \mathcal{N}^{l_2}(\Omega(i,:) = \frac{\varphi_k(\Omega(i,:))}{||\varphi_k(\Omega(i,:))||_2}
\end{equation}
where $\varphi_k$ denotes a ranking scheme that preserves the top $k$ candidates $(\omega_1^{c},\dots,\omega_k^{c})$ and sets others to zero.
\begin{equation}
\label{candidates}
    \mathcal{N}^{l_2}(\Omega(i,:) =[0,\dots,\omega_1^{c},\dots,0,\dots,\omega_k^{c},\dots,0]\in{\mathbb{R}^{m}}
\end{equation}
It's worth noting that for any $(e_j,e_{j^\prime})\in\Phi$ within the seed alignments, inter-graph propagation exclusively occurs between the aligned entity pairs, and their $\mathcal{N}^{l_2}(\Omega(j,:))$ is precisely defined as a vector with only one nonzero value at the $j^\prime$-th element:
\begin{equation}
\label{nonzero}
    \mathcal{N}^{l_2}(\Omega(j,:)= \mathbb{I}_{j^\prime}=[0,\dots,1,\dots,0]\in{\mathbb{R}^{m}}
\end{equation}

\subsubsection{Propagation Strategy.}
Inspired by the PPR formulation in Eq. \ref{propagation}, we introduce a random-walk propagation method to harness the potential isomorphism phenomenon, denoted as:
\begin{equation}
S = \sum_{\ell=0}^{\infty} \alpha (1-\alpha)^\ell \Lambda_3^\ell
\end{equation}
This method facilitates the propagation of neighborhood information through isomorphic subgraphs between potentially aligned entities. (shown in lines 4-7, Alg.\ref{AppAlgorithm})

\subsubsection{Matrix Factorization.}
\label{Matrix Factorization}
In experiments, we observed that matrix $S$ eliminates small values. To adapt for large-scale datasets, we introduce a threshold $\delta$ where values below it are set to zero. Then, we compute ${\log}(\frac{S}{\delta})$ for non-zero entries, obtaining a sparse matrix approximating the propagation results. The experimental results and analysis about $\delta$ are provided in Section \ref{Hyperparameters}. 

Next, using a differentiable Singular Value Decomposition (SVD)  \cite{ref_article41} with input dimension $d$, we factorize the matrix $S$. This produces $U$ and $V$ matrices, both of size $(n+m)\times d$, along with a diagonal matrix $\Sigma$, such that $U\Sigma{V^\top}\approx{S}$:
\begin{equation}
\label{lowranksvd}
U,\Sigma,{V^\top} = SVD (Sparse(S, \delta), d)
\end{equation}
Finally, we compute entity embeddings $X$ (in lines 11-13, Alg.\ref{AppAlgorithm})
\begin{equation}
    X = U\sqrt{\Sigma}
\end{equation}
This method ensures robust and informative information propagation within the unified space. Subsequently, we derive global pairwise similarities by using source embeddings $X_s=X[:n]$ and target embeddings $X_t=X[n:n+m]$:
\begin{equation}
    \Omega_0^\prime = X_s X_t^\top
\end{equation}

\subsubsection{Refinement Scheme.}
\label{Refining Scheme}
Two similarity matrices are generated in the previous process. The first is the initial similarity matrix produced by the aggregation-based EA model, emphasizing local information via n-hop neighborhood aggregation. The second is the propagation scheme, which focuses on global information in the unified space. To integrate these matrices, we employ element-wise Hadamard products, resulting in a new matrix denoted as $\overline{\Omega}_0$:
\begin{equation}
    \overline{\Omega}_0 = \Omega_0 \circ \Omega_0^\prime
\end{equation}
However, the direct fusion of these matrices can potentially compromise topological consistency information within the original similarity matrices. \cite{ref_article18}. To mitigate this loss, we introduce a refinement scheme that centers on preserving topological consistency. This refinement scheme leverages the concept of matched neighborhood consistency (MNC) scores \cite{ref_article42} to quantify topological consistency and iteratively enhances these scores.

The MNC score was originally defined in \cite{ref_article42} as:
\begin{equation}
    \mathbf{S}^{MNC} = A_s\Omega A_t\oslash(A_s\Omega \mathbf{1}^m\otimes\mathbf{1}^m + \mathbf{1}^nA_t\mathbf{1}^m-A_s\Omega A_t)
\end{equation}
Here, $\oslash$ denotes element-wise division, and $\otimes$ signifies the Kronecker product. However, for simplification purposes, we follow \cite{ref_article42} that approximates $\mathbf{S}^{MNC}$ as:
\begin{equation}
    \mathbf{S}^{MNC}\approx {A}_s\Omega {A}_t
\end{equation}
To iteratively update the similarity matrix $\overline{\Omega}$ in the $k$-th iteration, we introduce a small $\epsilon$ to every element of $\overline{\Omega}$ to assign every pair of entities a token match score, irrespective of whether the initial EA model identified them as matches. This enables us to rectify potential false negatives present in the initial similarity matrix. The similarities for aligned entity pairs are set as a one-nonzero-value vector, as illustrated in Eq. \ref{nonzero}:
\begin{equation}
\overline{\Omega}_k = \phi(\overline{\Omega}_{k-1}\circ A_s\overline{\Omega}_{k-1}A_t + \epsilon)
\end{equation}
Here, the function $\phi$ selects the entity pairs from the seed alignments $\Phi$ and assigns them a similarity score of 1, while setting all other values to zero. It is defined as:
\begin{equation}
\label{nonzero}
    \phi(\Omega, (e_i,e_j)\in \Phi) := \Omega(i)=\mathbb{I}_{j}=[0,\dots,1,\dots,0]
\end{equation}
This iterative refinement process is a critical component of our PipEA method (shown in lines 15-20, Alg.\ref{AppAlgorithm}), enabling the correction of potential biases and enhancing the alignment accuracy.

\noindent
\textbf{Remark:} The refinement Scheme is distinct from RefiNA \cite{ref_article42}. While RefiNA is an unsupervised graph matching technique, our scheme is supervised which needs seed alignments to update $\Omega_{k}$ and calculate the MNC score in each iteration.

\begin{algorithm}[t]
\caption{Potential Isomorphism Propagation Strategy}
\label{AppAlgorithm}
\renewcommand{\algorithmicrequire}{\textbf{Input:}}
\renewcommand{\algorithmicensure}{\textbf{Output:}}
\begin{algorithmic}[1] 
\REQUIRE 
The adjacency matrices $A_s, A_t$, number of iterations $L_1,L_2$, embedding dimension $d$, seed alignments $\Phi$\\
\ENSURE 
The final refined similarity matrix $\overline{\Omega}$.\\
\STATE Initialize $S=\textbf{0}$, $R=\textbf{I}$. ($\textbf{I}$ is the identity matrix)
\STATE $\Omega_0 \leftarrow $ $Encoder(A_s, A_t)$
\STATE Constructing the propagation operator $\Lambda_3$.
\FOR{$k=1\rightarrow L_1$}
\STATE $S \leftarrow S+{{\alpha}{\cdot}R}$
\STATE $R \leftarrow S+{(1-{\alpha}){\cdot}\Lambda_3{\cdot}R}$
\ENDFOR
\FOR{${\forall}S(i,j){\in}S$}
\STATE \textbf{if} $S(i,j) < \delta$,\quad \textbf{then} $S(i,j) \leftarrow 0$\\
\ENDFOR
\STATE Get matrix $\log (\frac{S}{\delta})$ for non-zero entries 
\STATE ${[U,\Sigma,{V^\top}]} \leftarrow$  Differentiable Sparse SVD($\log (\frac{S}{\delta}),d$)
\STATE Get the eigenvector entity embedding $X_s, X_t \leftarrow U\sqrt\Sigma$
\STATE $\overline{\Omega}^\prime \leftarrow $ $X_s X_t^\top$
\STATE $\overline{\Omega}_0 \leftarrow \Omega_0 \circ \Omega_0^\prime$
\FOR{$k=1\rightarrow L_2$}
\STATE $\overline{\Omega}_{k-1} \leftarrow \phi(\overline{\Omega}_{k-1}, \Phi)$
\STATE $\overline{\Omega}_k \leftarrow \overline{\Omega}_{k-1}\circ A_s\overline{\Omega}_{k-1}A_t + \epsilon$
\STATE $\overline{\Omega}_k \leftarrow$ Normalize $ \overline{\Omega}_k$ by row then column
\ENDFOR
\STATE \textbf{return} $\overline{\Omega}$
\end{algorithmic}
\end{algorithm}

\begin{table*}[t]
\caption{Main results on cross-lingual datasets with 1\% seed alignments. PEEA, Dual-AMN, and LightEA represent the SOTA in weakly supervised, normally supervised, and unsupervised EA. "Improv." indicates the percentage improvement over the original model. PipEA(D), PipEA(P), and PipEA(L) denote PipEA with Dual-AMN, PEEA, and LightEA as encoders, respectively.}
\label{tab1}
\setlength{\tabcolsep}{2pt}
\centering
\small
\resizebox{\textwidth}{!}{
\begin{tabular}{cc|ccccccccc|ccccccccc}
\toprule
\multicolumn{2}{c|}{\multirow{2}{*}{Datasets}}  & \multicolumn{9}{c|}{Cross-Lingual Datasets} & \multicolumn{9}{c}{Mono-Lingual Datasets}   \cr
\cmidrule{3-20}
\multicolumn{2}{c|}{}                           & \multicolumn{3}{c}{15KEN-DE}                      & \multicolumn{3}{c}{15KEN-FR}                        & \multicolumn{3}{c|}{100KEN-FR}                       & \multicolumn{3}{c}{15KDBP-Wiki}                    & \multicolumn{3}{c}{15KDBP-Yago}                  & \multicolumn{3}{c}{100KDBP-Wiki}                    \\
\cmidrule(lr){1-2}\cmidrule(lr){3-5}\cmidrule(lr){6-8}\cmidrule(lr){9-11}\cmidrule(lr){12-14}\cmidrule(lr){15-17}\cmidrule(lr){18-20}
\multicolumn{2}{c|}{Models}                     & H@1             & H@10           & MRR            & H@1             & H@10            & MRR             & H@1             & H@10            & MRR             & H@1             & H@10           & MRR             & H@1            & H@10           & MRR            & H@1             & H@10            & MRR             \\
\midrule
\multirow{13}{*}{\rotatebox{90}{Basic}}     & \multicolumn{1}{|c|}{GCN-Align}        & 10.9            & 26.7           & 16.4           & 3.6             & 15.2            & 7.1             & 2.5             & 9.4             & 5.0             & 3.1             & 11.0           & 5.8             & 40.1           & 60.6           & 47.1           & 3.5             & 11.4            & 6.2             \\
                            & \multicolumn{1}{|c|}{PSR}              & 21.5            & 49.7           & 31.0           & 15.1            & 38.1            & 22.9            & 13.2            & 32.9            & 19.9            & 19.5            & 44.2           & 27.9            & 25.3           & 51.6           & 34.2           & 14.6            & 33.5            & 21.0            \\
                            & \multicolumn{1}{|c|}{MRAEA}            & 28.6            & 58.7           & 38.7           & 14.4            & 38.5            & 22.4            & 13.5            & 36.1            & 21.0            & 19.4            & 45.4           & 28.1            & 42.9           & 72.4           & 53.1           & 17.1            & 39.7            & 24.7            \\
                            & \multicolumn{1}{|c|}{RREA}             & 48.5            & 72.5           & 56.8           & 26.3            & 56.4            & 36.4            & 16.4            & 40.6            & 24.5            & 41.8            & 67.5           & 50.7            & 82.1           & 92.8           & 86.0           & 21.4            & 45.9            & 29.7            \\
\cmidrule{2-20}
                            & \multicolumn{1}{|c|}{DualAMN}    & 51.9             & 75.4            & 60.1            & 25.2             & 52.1              & 34.3            & 15.0             & 38.6             & 22.8             & 40.0             & 64.5             & 48.5             & 76.2             & 88.3             & 80.7             & 16.2             & 37.7             & 23.5            \\
                            & \multicolumn{1}{|c|}{\textbf{PipEA(D)}} & \textbf{82.3}    & \textbf{86.4}   & \textbf{83.9}   & \textbf{48.5}    & \textbf{58.9}     & \textbf{52.4}   & \textbf{23.8}    & \textbf{43.7}    & \textbf{29.7}    & \textbf{71.6}    & \textbf{76.3}    & \textbf{73.4}    & \textbf{96.7}    & \textbf{97.8}    & \textbf{97.2}    & \textbf{31.9}    & \textbf{54.6}    & \textbf{39.6}   \\
                            & \multicolumn{1}{|c|}{\textbf{Improv.}}  & \textbf{58.6\%}  & \textbf{14.6\%} & \textbf{39.6\%} & \textbf{92.5\%}  & \textbf{13.1\%}   & \textbf{52.8\%} & \textbf{58.7\%}  & \textbf{13.2\%}  & \textbf{30.3\%}  & \textbf{79.0\%}  & \textbf{18.3\%}  & \textbf{51.3\%}  & \textbf{26.9\%}  & \textbf{10.8\%}  & \textbf{20.4\%}  & \textbf{96.9\%}  & \textbf{44.8\%}  & \textbf{68.5\%} \\
\cmidrule{2-20}
                            & \multicolumn{1}{|c|}{PEEA}             & 68.6             & 88.1            & 75.4            & 44.0             & 72.4              & 53.6            & 20.3             & 47.6             & 29.4             & 59.4             & 81.2             & 67.1             & 92.6             & 97.4             & 94.4             & 24.4             & 50.6             & 33.2            \\
                            & \multicolumn{1}{|c|}{\textbf{PipEA(P)}} & \textbf{85.4}    & \textbf{92.0}   & \textbf{87.8}   & \textbf{58.1}    & \textbf{74.4}     & \textbf{63.7}   & \textbf{32.7}    & \textbf{49.1}    & \textbf{38.5}    & \textbf{75.4}    & \textbf{82.5}    & \textbf{77.9}    & \textbf{96.9}    & \textbf{98.4}    & \textbf{97.5}    & \textbf{37.2}    & \textbf{58.5}    & \textbf{44.3}   \\
                            & \multicolumn{1}{|c|}{\textbf{Improv.}}  & \textbf{24.5\%}  & \textbf{4.4\%}  & \textbf{16.4\%} & \textbf{32.0\%}  & \textbf{2.8\%}    & \textbf{18.8\%} & \textbf{61.3\%}  & \textbf{3.1\%}   & \textbf{31.1\%}  & \textbf{26.9\%}  & \textbf{1.6\%}   & \textbf{16.1\%}  & \textbf{4.6\%}   & \textbf{1.0\%}   & \textbf{3.3\%}   & \textbf{52.5\%}  & \textbf{15.6\%}  & \textbf{33.4\%} \\                
\cmidrule{2-20}
& \multicolumn{1}{|c|}{LightEA}           & 59.1             & 0.8             & 65.6            & 25.3             & 50.8              & 33.8            & 27.0             & 51.8             & 35.3             & 41.1             & 67.6             & 50.1             & 69.6             & 86.1             & 75.3             & 32.9             & 55.7             & 40.69           \\
                            & \multicolumn{1}{|c|}{\textbf{PipEA(L)}} & \textbf{82.9}    & \textbf{89.4}   & \textbf{85.3}   & \textbf{50.9}    & \textbf{68.3}     & \textbf{56.9}   & \textbf{32.7}    & \textbf{49.1}    & \textbf{38.5}    & \textbf{74.9}    & \textbf{81.9}    & \textbf{77.4}    & \textbf{96.5}    & \textbf{98.1}    & \textbf{97.1}    & \textbf{38.3}    & \textbf{57.5}    & \textbf{44.30}  \\
                            & \multicolumn{1}{|c|}{\textbf{Improv.}}  & \textbf{23.8\%}  & \textbf{11.8\%} & \textbf{19.7\%} & \textbf{25.6\%}  & \textbf{17.5\%}   & \textbf{23.1\%} & \textbf{21.1\%}  & \textbf{-5.3\%}  & \textbf{9.1\%}   & \textbf{33.80\%} & \textbf{14.30\%} & \textbf{27.30\%} & \textbf{26.90\%} & \textbf{12.00\%} & \textbf{21.80\%} & \textbf{16.5\%}  & \textbf{3.2\%}   & \textbf{8.9\%}  \\
\midrule
\multirow{15}{*}{\rotatebox{90}{Semi-supervised}} & \multicolumn{1}{|c|}{BootEA}           & 0.6             & 3.6            & 1.7            & 2.7             & 10.1            & 5.2             & 2.1             & 4.4             & 5.4             & 1.8             & 7.4            & 3.7             & 2.7            & 1.6            & 6.6            & 3.3             & 40.5            & 28.2            \\
                            & \multicolumn{1}{|c|}{KECG}             & 42.5            & 64.6           & 50.2           & 14.1            & 43.3            & 23.7            & 11.1            & 30.5            & 17.7            & 23.8            & 45.8           & 31.3            & 57.8           & 78.8           & 65.1           & 20.2            & 42.4            & 27.8            \\
                            & \multicolumn{1}{|c|}{SEA}              & 43.1            & 66.5           & 51.2           & 18.9            & 49.4            & 29.1            & 12.5            & 34.5            & 19.9            & 15.6            & 40.4           & 24.0            & 81.4           & 92.7           & 85.5           & 13.6            & 33.6            & 20.4            \\
                            & \multicolumn{1}{|c|}{PSR}              & 79.9            & 91.4           & 84.1           & 52.8            & 75.3            & 60.5            & 55.4            & 72.6            & 62.0            & 72.1            & 85.5           & 77.1            & 95.2           & 97.9           & 96.3           & 59.3            & 68.4            & 61.7            \\
                            & \multicolumn{1}{|c|}{MRAEA}            & 64.7            & 84.5           & 72.3           & 35.9            & 61.7            & 44.7            & 38.2            & 64.1            & 55.1            & 58.6            & 78.4           & 66.5            & 88.5           & 97.8           & 92.7           & 45.8            & 60.2            & 48.4            \\
                            & \multicolumn{1}{|c|}{RREA}             & 76.5            & 90.7           & 81.8           & 39.6            & 68.0            & 49.5            & 55.2            & 74.1            & 63.3            & 66.8            & 83.5           & 73.1            & 95.6           & 98.3           & 96.7           & 58.0            & 71.8            & 62.7            \\
\cmidrule{2-20}
                            & \multicolumn{1}{|c|}{Dual-AMN}          & 77.1            & 93.0           & 83.6           & 48.4            & 79.0            & 59.4            & 57.3            & 76.2            & 65.9            & 67.2            & 86.6           & 75.1            & 92.4           & 98.4           & 95.2           & 59.6            & 72.1            & 63.8            \\
                            & \multicolumn{1}{|c|}{\textbf{PipEA(D)}} & \textbf{86.6}    & \textbf{93.5}   & \textbf{88.1}   & \textbf{61.3}    & \textbf{70.6}   & \textbf{64.9}   & \textbf{67.7}    & \textbf{84.9}    & \textbf{73.3}    & \textbf{78.4}    & \textbf{82.5}   & \textbf{80.0}   & \textbf{97.0}   & \textbf{98.5}   & \textbf{97.6}   & \textbf{71.8}    & \textbf{70.6}   & \textbf{68.1}   \\
                            & \multicolumn{1}{|c|}{\textbf{Improv.}}  & \textbf{12.26\%} & \textbf{0.52\%} & \textbf{5.41\%} & \textbf{26.74\%} & -10.61\%      & \textbf{9.33\%} & \textbf{18.15\%} & \textbf{11.40\%} & \textbf{11.29\%} & \textbf{16.61\%} & -4.70\%      & \textbf{6.48\%} & \textbf{1.56\%} & \textbf{0.1\%}      & \textbf{0.89\%} & \textbf{20.45\%} & -2.08\%     & \textbf{6.68\%} \\
\cmidrule{2-20}
                            & \multicolumn{1}{|c|}{PEEA}             & 83.6             & 93.2            & 87.1            & 55.6             & 79.4              & 64.0            & 59.6             & 77.6             & 66.3             & 78.5             & 90.0             & 82.7             & 96.6             & 98.6             & 97.3             & 65.3             & 78.2             & 70.6            \\
                            & \multicolumn{1}{|c|}{\textbf{PipEA(P)}}    & \textbf{90.7}   & \textbf{95.4}  & \textbf{92.5}  & \textbf{80.1}   & \textbf{89.8}   & \textbf{83.6}   & \textbf{71.6}   & \textbf{86.1}   & \textbf{76.7}   & \textbf{84.3}   & \textbf{90.9}  & \textbf{86.1}   & \textbf{97.2}  & \textbf{98.9}  & \textbf{97.7}  & \textbf{77.6}   & \textbf{86.8}   & \textbf{80.8}   \\
                            & \multicolumn{1}{|c|}{\textbf{Improv.}} & \textbf{8.5\%}  & \textbf{2.4\%} & \textbf{6.2\%} & \textbf{44.1\%} & \textbf{13.1\%} & \textbf{30.6\%} & \textbf{20.1\%} & \textbf{11.0\%} & \textbf{15.7\%} & \textbf{7.4\%}  & \textbf{1.0\%} & \textbf{4.1\%}  & \textbf{0.6\%} & \textbf{0.3\%} & \textbf{0.4\%} & \textbf{18.8\%} & \textbf{11.0\%} & \textbf{14.4\%}\\
\cmidrule{2-20}
& \multicolumn{1}{|c|}{LightEA}           & 88.2             & 93.7            & 90.1            & 62.8             & 82.4              & 69.6            & 60.5             & 78.9             & 67.0             & 79.8             & 89.3             & 83.2             & 97.0             & 98.5             & 97.6             & 68.2             & 78.9             & 72.0            \\
                            & \multicolumn{1}{|c|}{\textbf{PipEA(L)}} & \textbf{89.1}    & \textbf{90.9}   & \textbf{90.2}   & \textbf{68.8}    & \textbf{78.9}     & \textbf{72.8}   & \textbf{62.7}    & \textbf{78.1}    & \textbf{67.8}    & \textbf{81.8}    & \textbf{85.6}    & \textbf{83.3}    & \textbf{97.2}    & \textbf{98.2}    & \textbf{97.6}    & \textbf{68.4}    & \textbf{77.6}    & \textbf{72.0}   \\
                            & \multicolumn{1}{|c|}{\textbf{Improv.}}  & \textbf{1.0\%}   & -3.0\% & \textbf{0.1\%}  & \textbf{9.6\%}   & -4.3\%   & \textbf{4.6\%}  & \textbf{3.6\%}   & -1.0\%  & \textbf{1.2\%}   & \textbf{2.48\%}  & \textbf{-4.20\%} & \textbf{0.17\%}  & \textbf{0.21\%}  & -0.27\% & \textbf{0.00\%}  & \textbf{0.29\%}  & -1.65\% & \textbf{0.00\%} \\
\bottomrule
\end{tabular}}
\end{table*}


\subsection{Predicting One-to-one Alignments}
Early EA methods calculate entity pair similarities directly, leading to potential violations of the one-to-one alignment constraint. To circumvent this, the transformation of the EA decoding process into an assignment problem has shown promise \cite{PEEA,mao2022effective}, markedly improving performance:
\begin{equation}
\mathop{\arg\max}\limits_{\mathbf{P}\in \mathbb{P}_{|\mathcal{E}|}} <\mathbf{P}, \Omega>
\end{equation}
Here, $\Omega$ denotes the similarity matrix, and $\mathbf{P}$ is a permutation matrix that outlines the alignment strategy. While the Hungarian algorithm offers a precise solution, its $O(|\mathcal{E}|^3)$ complexity is prohibitive for large KGs. Adopting the Sinkhorn operator \cite{cuturi2013sinkhorn}, we apply a scalable and parallelizable algorithm, significantly reducing the computational load to $O(q|\mathcal{E}|^2)$ with $q$ set to 10 iterations.
\begin{equation}
\begin{aligned}
    \mathop{\arg\max}\limits_{\mathbf{P}\in \mathbb{P}_{|\mathcal{E}|}} <\mathbf{P}, \overline{\Omega}> &= Sinkhorn(\overline{\Omega})\\
    Sink^{(0)}(\overline{\Omega}) &= exp(\overline{\Omega})\\
    Sink^{(q)}(\overline{\Omega}) &= \mathcal{N}_c(\mathcal{N}_r(Sink^{(q-1)}(\overline{\Omega})))\\
    Sinkhorn(\overline{\Omega}) &= \lim_{q\rightarrow \infty} Sink^{(q)}(\overline{\Omega})
\end{aligned}
\end{equation}
$\mathcal{N}_r(\Omega)=\Omega\oslash(\Omega \mathbf{1}_N \mathbf{1}_N^\top)$ and $\mathcal{N}_c(\Omega)=\Omega\oslash(\mathbf{1}_N \mathbf{1}_N^\top \Omega)$ are the row and column-wise normalization operators of a matrix, $\oslash$ represents the element-wise division, and $1_N$ is a column vector of ones.

\subsection{Reducing Complexity}
The PipEA method introduces potential isomorphism propagation to intricately map the inter- and intra-structural nuances of KGs. Despite its comprehensive approach, this technique inherently increases time complexity, primarily due to the computation of $S \in \mathbb{R}^{(n+m)\times (n+m)}$. To mitigate this, we employ strategies to streamline the computational process:

\noindent
\textbf{Sparse threshold:}
Referenced in Section \ref{Matrix Factorization}, a sparse threshold $\delta$ is applied, setting all scores below $\delta$ to zero. This approach filters out insignificant entities, focusing on those most likely to contribute to accurate propagation outcomes. An in-depth analysis of the impact of $\delta$ is presented in our experimental section.

\noindent
\textbf{Low-rank SVD:}
Following observations by \cite{mao2022effective, wang2024gradient} that significant information in $S$ is concentrated in the top singular values, we employ randomized low-rank SVD \cite{ref_article41} (Section.\ref{Matrix Factorization}). This method approximates matrix decomposition, retaining only the top 1\% of singular values, thus reducing both space and time complexity.

\noindent
\textbf{Sinkhorn operator:}
Adopting the Sinkhorn operator \cite{cuturi2013sinkhorn} in prediction, we apply a scalable and parallelizable algorithm, significantly reducing the complexity to $O(q|\mathcal{E}|^2)$ with $q$ set to 10 iterations.

The time complexity analysis are provided in Appendix \ref{complexity}.

\section{Experiments}
\label{experiment}

\subsection{Experimental Settings}
\subsubsection{Datasets.}
To assess PipEA's effectiveness, we turn to the OpenEA benchmark dataset (V2), thoughtfully designed to closely mirror the data distribution found in real knowledge graphs. Our evaluation encompasses two cross-lingual settings (English-to-French and English-to-German), sourced from the multilingual DBpedia, and two monolingual settings (DBpedia-to-Wikidata and DBpedia-to-YAGO), extracted from popular knowledge bases. In each setting, we consider two sizes: one with 15K pairs of reference entities and another with 100K pairs. In contrast to the conventional use of 30\% of seed alignments for training, we adopt weakly supervised scenarios, employing only 1\% of seed alignments randomly sampled from the datasets. Appendix \ref{DatasetsEva} provides comprehensive statistics.


\subsubsection{Baselines.}
We compare our PipEA with 10 prominent methods that solely rely on the original structure information:
\begin{itemize}
\item[$\bullet$] Basic models: GCN-Align  \citep{ref_article37}, MRAEA  \citep{ref_article24}, RREA  \citep{ref_article25}, PSR  \citep{ref_article38}, Dual-AMN  \citep{ref_article29}, LightEA \citep{LightEA} and PEEA  \citep{PEEA}.
\item[$\bullet$] Semi-supervised: We also evaluate the PipEA in the \textit{semi-supervised settings}. In addition to the above models, we have added a model that specializes in semi-supervision such as BootEA \citep{ref_article23}, KECG \citep{KECG}, SEA \citep{SEA}.
\end{itemize}
We adhere to the default hyper-parameters as reported in their respective literature. All experiments were diligently executed on a server equipped with an NVIDIA A100 and NVIDIA A800 GPU. 


\subsubsection{Parameter Settings.}
Following previous studies\cite{tsitsulin2018verse, STRAP,perozzi2014deepwalk}, we set the embedding dimension to $d = 128$ and determine the token match score $\epsilon = 0.00001$. We employ 8 propagation and refinement iterations ($L_1 = 8$ and $L_2 = 8$). For further insights into hyper-parameters, kindly refer to Appendix \ref{Hyper-parameters}.


\begin{figure}[t]
\centering
	\subfloat[PipEA]{\includegraphics[width = 0.45\linewidth]{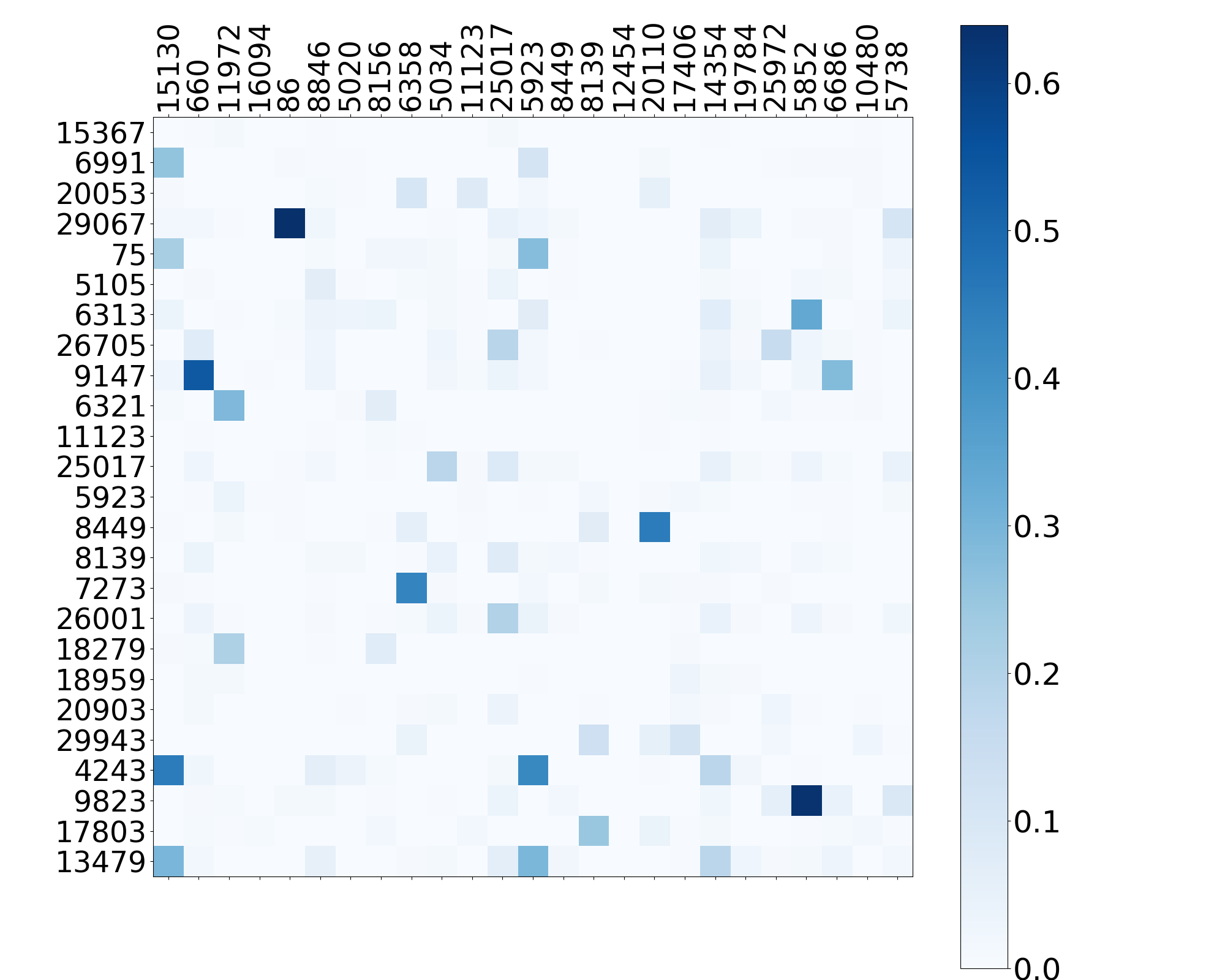}}
	\subfloat[PEEA]{\includegraphics[width = 0.45\linewidth]{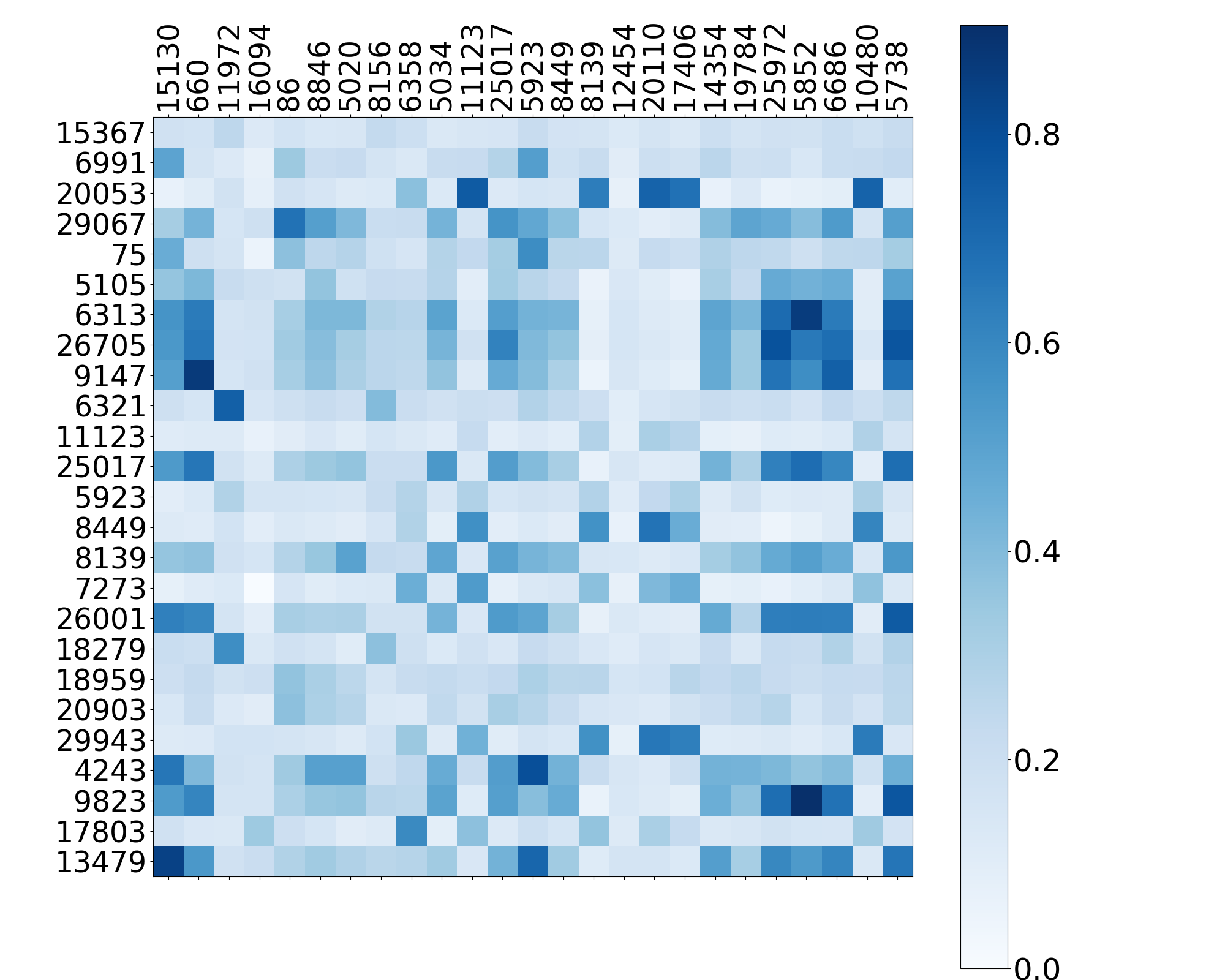}}
\caption{Similarity matrix of our method and PEEA encoder on sub-graph from 15KEN-DE.} \label{fig3}
\vspace{-0.3cm}
\end{figure}

\subsection{Main Results}
\label{main results}
Our evaluation of PipEA across six cross-lingual and monolingual datasets, under the condition of 1\% seed alignments, is summarized in Table \ref{tab1}. The results lead to several key insights:

(1) PipEA significantly outperforms standard alignment models across various encoders in both cross-lingual and monolingual settings. For example, on the 100K datasets, it improves H@1 by 61.3\% and 64.8\% with PEEA. PipEA(D), using DualAMN as the encoder, shows gains of over 50\%, achieving 92.5\% H@1 on the 15K EN-FR dataset. In terms of Mean Reciprocal Rank (MRR), PipEA(D), PipEA(P), and PipEA(L) exhibit improvements ranging from 20.4\% to 52.8\% on 15K datasets and from 15.6\% to 61.3\% on 100K datasets. These results demonstrate PipEA's ability to leverage isomorphic subgraphs and robust cross-graph propagation, making it highly adaptable and effective across different dataset sizes and encoder types.

(2) PipEA benefits significantly from semi-supervised settings, which iteratively incorporate high-quality aligned pairs into training. On the 15K EN-FR and 100K DBP-Wiki datasets, PipEA consistently improves both H@1 and MRR. Even without the iterative strategy, PipEA(D), PipEA(P), and PipEA(L) surpass most iterative models on 15K datasets, though the iterative approach further enhances performance. These results highlight PipEA’s effectiveness in refining models through iterative alignment.

(3) Across all datasets, PipEA outperforms existing models by addressing a key challenge in weakly supervised EA—limited neighborhood propagation within individual graphs. By leveraging isomorphic subgraphs and facilitating cross-graph neighborhood propagation, PipEA significantly improves alignment accuracy.

\subsection{Case Study}
Our case study analysis, focused on the 15KEN-DE dataset, examines the pairwise similarity quality between SOTA models and PipEA. By analyzing a subgraph of 25 entities, we visualized the similarity matrices of PipEA and PEEA through heat maps (Fig.\ref{fig3}). PipEA exhibits markedly clearer and more distinct similarity distributions than PEEA, underscoring its superior ability to refine similarity matrices for accurate entity pair identification. This clarity in similarity distribution explains PipEA's improved H@1 performance contrasted with H@10 performance, indicating its effectiveness in establishing confident one-to-one alignments without necessitating an extensive candidate pool.

\begin{table}[t]
\caption{H@1 of different methods under different supervised settings. PEEA is the encoder.}
\renewcommand\arraystretch{1.2}
\setlength{\tabcolsep}{2pt}
\Huge
\label{ratio}
\resizebox{\linewidth}{!}{
\begin{tabular}{ccccccccccc}
\toprule[1.5pt]
\multicolumn{1}{c}{Dataset}      & \multicolumn{5}{c}{15KEN-DE}                                                      & \multicolumn{5}{c}{15KEN-FR}                                                       \cr
\cmidrule(lr){1-1}\cmidrule(lr){2-6}\cmidrule(lr){7-11}
Seed Ratio & 1\%           & 5\%            & 10\%           & 20\%           & 30\%           & 1\%            & 5\%            & 10\%           & 20\%           & 30\%           \\ \midrule[1.2pt]
Ours           & \textbf{85.4} & \textbf{90.64} & \textbf{92.47} & \underline{94.62}          & \underline{95.46}          & \textbf{58.12} & \textbf{79.87} & \textbf{83.56} & \textbf{88.07} & \textbf{90.24} \\
PEEA           & \underline{68.67}         & \underline{84.58 }         & \underline{91.13}          & \textbf{94.75} & \textbf{95.94} & \underline{44.07}          & \underline{68.73 }         & \underline{77.4 }          & \underline{84.88}          & \underline{90.04}          \\
RREA           & 48.5          & 76.1           & 83.44          & 88.74          & 90.53          & 26.23          & 55.64          & 68.64          & 78.2           & 81.91          \\
Dual-AMN       & 51.95         & 76.46          & 86.13          & 91.2           & 93.17          & 25.28          & 55.51          & 68.11          & 78.78          & 84.11          \\
MRAEA          & 28.63         & 68.56          & 81.42          & 88.1           & 90.72          & 14.42          & 45.82          & 61.75          & 73.82          & 79.76          \\
PSR            & 21.59         & 62.17          & 76.83          & 85.61          & 89.09          & 15.18          & 45.81          & 60.69          & 73.06          & 79.66          \\
GCN-Align      & 10.99         & 20.76          & 24.21          & 27.43          & 30.5           & 3.68           & 12.84          & 18.76          & 23.52          & 28.83          \\ \bottomrule[1.5pt]
\end{tabular}}
\end{table}

\begin{table}[t]
\Huge
\caption{Ablation experiments without the iterative training strategy on cross-lingual and mono-lingual datasets}\label{tab4}
\renewcommand\arraystretch{1.1}
\setlength{\tabcolsep}{2pt}
\resizebox{\linewidth}{!}{
\begin{tabular}{ccccccccccccc}
\toprule[1.5pt]
\multirow{2}{*}{Methods} & \multicolumn{3}{c}{15KEN-DE}                     & \multicolumn{3}{c}{15KEN-FR}  & \multicolumn{3}{c}{15KDBP-Yago}                  & \multicolumn{3}{c}{15KDBP-Wiki}  \cr
\cmidrule(lr){2-4}\cmidrule(lr){5-7}\cmidrule(lr){8-10}\cmidrule(lr){11-13}
& H@1         & H@10        & MRR            & H@1         & H@10        & MRR  & H@1         & H@10        & MRR            & H@1         & H@10        & MRR   \\ \midrule
Ours                     & \textbf{85.40} & \textbf{92.03} & \textbf{87.80}  & \textbf{58.12} & \textbf{74.42} & \textbf{63.69} & \textbf{96.90}  & \textbf{98.48} & \textbf{97.51} & \textbf{75.40} & 82.55          & \textbf{77.94} \\ 
w\textbackslash{}o RS    & 79.99          & 91.24          & 81.24          & 49.72          & 72.72          & 58.95   & 95.32          & 98.2           & 96.47          & 69.89          & 81.57          & 74.51          \\
w\textbackslash{}o IS    & 78.48          & 88.84          & 81.98          & 48.27          & 70.4           & 55.53 & 92.46          & 96.63          & 93.95          & 67.93          & 77.49          & 71.08          \\
w\textbackslash{}o PS    & 75.64          & 88.06          & 79.92          & 49.53          & 70.18          & 56.52 & 92.32          & 97.99          & 94.43          & 67.36          & \textbf{82.72} & 73.07          \\ 
\bottomrule[1.5pt]
\end{tabular}}
\end{table}

\subsection{Discussions for Supervised Settings}
To assess the generality of our method under different supervised settings, we conducted comprehensive experiments comparing PipEA with baseline methods at different training data ratios (1\%, 5\%, 10\%, 20\%, and 30\%) using the 15KEN-DE and 15KEN-FR datasets. The results summarized in Table \ref{ratio} reveal several key insights: (1) Our method demonstrates remarkable performance even in conventional supervised settings. While its advantage over other baselines may diminish as the training ratio increases, PipEA consistently outperforms them. Notably, it achieves superior performance compared to baselines on the 15K EN-FR dataset, even when the seed alignment ratio is as high as 30\%. This phenomenon can be attributed to the nature of aggregation-based EA models, which operate by propagating neighborhood information. PipEA's unique capability to propagate information across graphs extends the size of the neighborhood involved in this propagation process, contributing to its sustained success across varying degrees of supervision. (2) The performance gap between PipEA and baselines in weakly supervised scenarios gradually narrows as the training ratio increases. The alignment performance on the 15K EN-DE dataset is lower than PEEA  \cite{PEEA} when the seed alignment ratio exceeds 10\%. This is because the propagation gap is diminished with substantial seed alignments available and PipEA was originally designed to tackle the challenges of limited seed alignments.

\subsection{Ablation Study}
To comprehensively evaluate the contributions of each component within PipEA, we conducted an ablation study based on PEEA encoder, introducing three variants of the model, as presented in Table \ref{tab4}:
(1) w/o RS (Refinement Scheme): In this variant, we excluded the refinement scheme, directly multiplying the initial similarity matrix $\Omega_0$ with the generated similarity matrix $\Omega_0^\prime$. As discussed in Section \ref{Refining Scheme}, directly fusing different similarity matrices can result in a loss of topological consistency information. Compared to our complete PipEA method, w/o RS led to a reduction in MRR ranging from 1.04\% to 6.56\% on the 15K datasets. Notably, the H@1 exhibited a significant drop of 8.4\% on 15K EN-FR. This suggests that without the refinement scheme, the model may produce more erroneous predictions, especially when only one outcome can be predicted.
(2) w/o IS (Initial Similarity): In this case, we excluded the initial similarity matrix $\Omega_0$ and solely utilized $\Omega_0^\prime$ as the final similarity matrix. It is crucial to note that the initial similarity matrix generated by the aggregation-based EA model primarily captures local similarities arising from its n-hop neighborhood information aggregation. Conversely, our method provides global similarities. By removing these local similarities, we observed a notable decrease in model performance, ranging from 4.44\% to 9.85\% in terms of H@1.
(3) w/o PS (Potential Isomorphism Propagation): In this variant, we omitted the generated similarity matrix $\Omega_0^\prime$. These experiments offered insights into the effectiveness of potential isomorphism propagation. Across the 15K datasets, w/o PS resulted in a substantial reduction in H@1, ranging from 4.58\% to 9.76\%. This outcome underscores the critical role of potential isomorphism propagation in our method. Furthermore, it highlights that our refinement scheme can effectively preserve topological consistency within the similarity matrix, leading to improved alignment accuracy.

\begin{figure}[t]
\centering
\includegraphics[width = \linewidth]{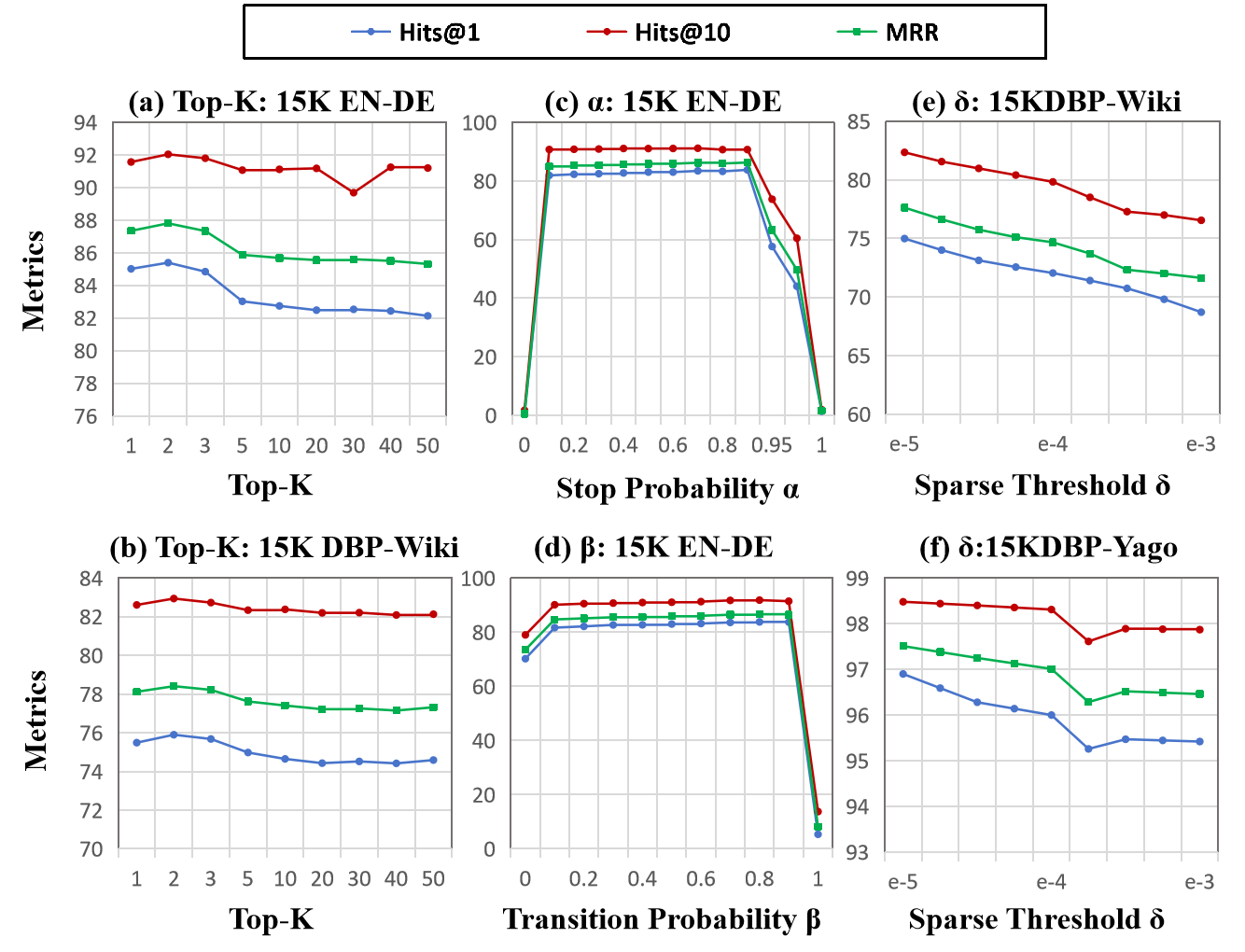}
\caption{Hyper-parameter experiments of our method.} \label{fig2}
\vspace{-0.3cm}
\end{figure}

\subsection{Hyper-parameters}
\label{Hyperparameters}

\subsubsection{Top-K Selection.}
PipEA employs a parameter, denoted as $k$, to select potentially aligned entities for propagation through normalization operations. It is represented as $\varphi_k$ in the normalization operations Eq.\ref{normalization operations}. Fig.~\ref{fig2}(a)(b) illustrates the sensitivity of alignment accuracy to the number of potentially aligned entities. Interestingly, our method achieves peak performance on both the EN-DE and DBP-Wiki datasets across all metrics when $k=2$. Further increasing the value of $k$ leads to diminishing accuracy, suggesting that allowing too many entities to share isomorphic subgraphs within the unified space may negatively impact performance. Consequently, we set $k=2$ as the optimal choice for our study.

\subsubsection{Random Walk Probability.}
The parameter $\alpha$ determines the probability of stopping at the current entity during a random walk. As depicted in Fig.~\ref{fig2}(c), PipEA exhibits robust performance within the range of $\alpha$ from 0.1 to 0.9. The highest H@1 score (85.4\%) is achieved at approximately $\alpha=0.7$, which we adopt in our study. Notably, our experiments reveal that PipEA's effectiveness is compromised under two extreme conditions: when $\alpha=0$, indicating constant random walks among all entities across graphs, and when $\alpha=1$, signifying a lack of neighborhood information propagation. These findings empirically validate the effectiveness of our method.

\subsubsection{Inter-Intra Graph Balance.}
Eq. \ref{construction process} outlines the propagation operator, consisting of intra-graph and inter-graph propagation components, with $\beta$ controlling the balance between these aspects. Specifically, $\beta$ determines the probability of propagation occurring within the intra-graph part compared to the inter-graph part. Fig.~\ref{fig2}(d) demonstrates that PipEA consistently performs well for $\beta$ values ranging from 0.1 to 0.9. We set $\beta=0.5$ in this work. Notably, when $\beta=0$, signifying propagation solely across graphs, H@1 exhibits a slight degradation, affirming PipEA's ability to capture neighborhood information in the unified space. Conversely, when $\beta=1$, indicating exclusive intra-graph propagation, H@1 drops significantly (1.36\%), underscoring the challenge of capturing entity dependencies between different graphs relying solely on single-graph structures. This emphasizes the rationale behind our method, which combines intra-graph and inter-graph propagation.

\subsubsection{Threshold}
To ensure the non-negativity of the generated matrix and its applicability across graphs of varying scales, we utilize the threshold $\delta$, as outlined in Section \ref{Propagation Scheme}. Our experiments indicate that accuracy declines as $\delta$ decreases, signifying the loss of valuable neighborhood information, as depicted in Fig.~\ref{fig2}(e)(f). This reaffirms the effectiveness of PipEA on 100K datasets and its significant performance improvements compared to baselines.

\section{Conclusion}
This research addresses the challenges of weakly supervised EA characterized by limited seed alignments. Through a propagation perspective, we analyze how aggregation-based EA models utilize operators to propagate pairwise entity similarities. A key insight from our theoretical analysis is the existence of isomorphic subgraphs among potentially aligned entities, facilitating effective information propagation essential for EA tasks. Based on it, we introduce the novel propagation operator, which enables cross-graph information propagation, and prove its convergence. We utilize this operator to develop PipEA, a general and novel approach that synergizes intra-graph and inter-graph propagation techniques while refining similarity matrices to enhance alignment accuracy. Our experimental results affirm PipEA's effectiveness, showcasing its superiority over state-of-the-art models in both weakly supervised and fully supervised contexts.

\newpage
\bibliographystyle{ACM-Reference-Format}
\bibliography{sample-base}

\appendix

\section{Notation}
\label{appendix:notation}
The notations used in this paper are summarized in the Table.\ref{symbol}:

\begin{table}[b]
\caption{Symbols and Notations}
\resizebox{0.48\textwidth}{!}{
\begin{tabular}{cc}
\toprule
Notation &Explanation\\
\midrule
$G_s, G_t$ & The source and target KGs to be aligned                                           \\
$\mathcal{E}_s, \mathcal{E}_t$ & The entity set of $G_s, G_t$                                           \\
$\mathcal{R}_s, \mathcal{R}_t$ & The relation set of $G_s, G_t$                                          \\
$\mathcal{T}_1, \mathcal{T}_2$ & The relation triple set of $G_s, G_t$                                          \\
$A_s, A_t$ & The adjacency matrices of $G_s, G_t$                                        \\
$\hat{A}_s, \hat{A}_t$ & The normalized adjacency matrices of $G_s, G_t$  \\
$n, m$ & The node number of $G_s, G_t$
 \\
\midrule
$x_i$ & The entity feature of entities $e_i$       \\
$X_s, X_t$ & The final entity embeddings of graphs $G_s, G_t$    \\
$\Omega(i,j)$ & The pairwise similarity of entities $i$ and $j$    \\
$\textbf{P}$                  & The permutation matrix across $G_s$ and $G_t$ \\ 
$P$                  & The propagation operator \\ 
$\Lambda_3$     &  Potential isomorphism propagation operator                 \\
$[\cdot||\cdot]$               & Vertical concatenation of two column vectors                       \\
$|\cdot|$               & The number of elements in a set                       \\
$\oslash$               &  The element-wise division                      \\
$\odot$               &  The element-wise multiplication                      \\
$\otimes$   &The Kronecker product   \\
$\pi_u(v)$               &  The transition probability from entity $u$ to $v$                      \\
$rank(\cdot)$               &  The rank of matrix  \\
$det(\cdot)$               &  The determinant of the matrix  \\
$Pr(\cdot)$               &  The probability of given event \\
$\mathcal{N}_{e_i}$                     & The neighboring entity set of entity $e_i$                                  \\ 
\hline
\bottomrule
\end{tabular}}
  \label{symbol}
\end{table}

\section{Theoretical results}

\begin{table*}[t]
\caption{Datasets statistics.}\label{Datasets}
\small
\resizebox{!}{!}{
\begin{tabular}{c|cc|cc|cc|cc|cc|cc}
\toprule[1pt]
\multicolumn{7}{c|}{Cross-Lingual Datasets}                       & \multicolumn{6}{c}{Mono-Lingual Datasets}                      \\
\midrule
         & \multicolumn{2}{c}{15KEN-FR} & \multicolumn{2}{c}{15KEN-DE} & \multicolumn{2}{c|}{100KEN-FR} & \multicolumn{2}{c}{15KDBP-Wiki} & \multicolumn{2}{c}{15KDBP-Yago} & \multicolumn{2}{c}{100KDBP-Wiki} \\
\midrule
Relation & 193    & 166    & 169    & 96     & 379     & 287     & 167     & 121      & 72      & 21       & 318      & 239       \\
Triple   & 96318  & 80112  & 84867  & 92632  & 649902  & 561391  & 73983   & 83365    & 68063   & 60970    & 616457   & 588203    \\
Anchors  & 15000  & 15000  & 15000  & 15000  & 100000  & 100000  & 15000   & 15000    & 15000   & 15000    & 100000   & 100000    \\
Train    & 150    & 150    & 150    & 150    & 1000    & 1000    & 150     & 150      & 150     & 150      & 1000     & 1000      \\
Test     & 14850  & 14850  & 14850  & 14850  & 99000   & 99000   & 14850   & 14850    & 14850   & 14850    & 99000    & 99000    \\
\bottomrule[1pt]
\end{tabular}}
\end{table*}

\begin{table}[]
\centering
\caption{Hyperparameter settings of model.}
\label{tb:hp}
 \resizebox{\linewidth}{!}{
\begin{tabular}{cccccccc}
  \toprule
  Models & $d$ & $lr$ & $b$ &$\eta$ &$\gamma$ &$dr$ &$eps$\\
  \midrule
  PSR&300,300&0.005&512&-&1.5,2.0&0.3&40\\
  SEA&100&0.01&5000&10&1.5,2.0&-&2000\\
  MRAEA&50,50&0.001&10000&5&1.0&0.3&5000\\
  KECG&128&0.005&-&1&2.0&-&1000\\
    BootEA&75&0.01&20000&10&-&-&500\\
  RREA&100,100&0.005&15000,100000&5&3.0&0.3&1200\\
  Dual-AMN&128,128&0.005&1024&-&1.0&0.3&100\\
  GCN-Align&300&0.02&-&5&3.0&0&2000\\
  PEEA&128,128,128&0.005&1024&-&15.0&0.3&60\\
  LightEA&1024,1024&-&-&-&-&-&-\\
  Ours&128,128,128&0.005&1024&-&15.0&0.3&60\\
  \bottomrule
\end{tabular}}
\end{table}

\subsection{Proof of Proposition \ref{pro1}}
\label{ProofofProposition1}
Consider the entity representation in an aggregation-based encoder, expressed by Equation \ref{GCNembedding}:
\begin{equation}
    x_i=\sum_{j=1}^{n}{\lambda_{i,j}x_j}
\end{equation}
where $x_i$ represents an entity and $x_j$ represents neighboring entities. The propagation weights $\lambda_{i,j}$ are given by:
\begin{equation}
\label{lambadavalue}
    \lambda_{i,j}=\frac{\mathcal{B}_{(x_i,x_j)\in \mathcal{N}}}{|\mathcal{N}_{x_i}|}
\end{equation}
In the above equation, $\mathcal{B}$ is an indicator function that returns 1 if there exists a relation between $x_i$ and $x_j$ and 0 otherwise. The similarity between entities $x_i \in \mathcal{E}_s$ and $y_i \in \mathcal{E}_t$ can be calculated using the inner product as follows:
\begin{equation}
\label{lambadavalue}
    \omega_{i,j}=x_i\cdot y_j=\sum_{k=1}^n\sum_{l=1}^m\lambda^s_{i,k}\lambda^t_{j,l}x_k\cdot y_l=\sum_{k=1}^n\sum_{l=1}^m\lambda^s_{i,k}\lambda^t_{j,l}\omega_{k,l}
\end{equation}
The above equation illustrates how the similarity between two entities is influenced by their associated neighbors.

We define pairwise entity similarities for the two knowledge graphs as a matrix $\Omega = (\omega_{i,j})^{n,m}_{i=1,j=1}$. Additionally, matrices $\Lambda^s = (\lambda_{i,j}^s)^{n,n}_{i=1,j=1}$ and $\Lambda^t = (\lambda_{i,j}^t)^{m,m}_{i=1,j=1}$ are constructed to serve as propagation operators for the source and target knowledge graphs, respectively.
A key constraint on the propagation matrix $\Lambda$ is introduced in Equation \ref{lambdasum}, ensuring that each row sums to 1:
\begin{equation}
\label{lambdasum}
|\Lambda(i,:)| =\sum_{j=1}^{n}\lambda_{i,j}=\frac{\sum_{j=1}^{n}|{\mathcal{B}{(x_i,x_j)\in \mathcal{N}}}|}{|\mathcal{N}{x_i}|}=1
\end{equation}
This condition signifies that each element $\lambda_{i,j}$ represents the probability of information propagation from entity $x_i$ to $x_j$. Consequently, we establish that $\Lambda^s \Omega (\Lambda^t)^\top = \Omega$, demonstrating that $\Lambda$ functions as a random walk strategy facilitating information propagation within the similarity matrix $\Omega$. The iterative application of $\Lambda$ leverages its properties related to fixpoints, essential for effective alignment.

\subsection{Proof of Proposition \ref{pro2}}
\label{ProofofProposition2}
We aim to show that ${rank(\Lambda \Lambda^\top ) < n}$, the complementary events, have zero probability. Let's take $Y = \Lambda \Lambda^\top$ as an example. It follows that $rank(Y) < n$ if and only if $rank(Y^\top Y) < n$. This occurs if and only if $det(Y^\top Y) =0$. Therefore, we need to demonstrate that the probability of $det(Y^\top Y) =0$ is zero. We have
\begin{equation}
g(Y) = det(Y^\top Y)
\end{equation}
where $g$ is a polynomial. We are interested in the probability of its zero-level set. However, the zero-level set of a non-zero polynomial has zero Lebesgue measure. In our case, $g$ is not the constant zero function since $g(Q) = 1$. Consequently, we have
\begin{equation}
\label{pro2equ}
Pr(g(Y) = 0) =\int_{g^{-1}({0})}^{}f(x)dx = 0
\end{equation}
where the integral is zero since we integrate over a set with zero measure. $Pr$ represents the probability of an event occurring, and $f(x)$ is the probability density function of $x$. The meaning of the above equation is that the probability of the event ${det(Y> Y) = 0}$ is zero, Thus we prove that 
\begin{equation}
\label{rank(Y)}
    rank(Y) = n
\end{equation}
Theorem 3.4 in \cite{EASIM} shows that embedding-based EA models focus on identifying fixed-point sets of pairwise entity similarities through an alignment mapping function, aimed at aligning KG matrices. In our analysis, these matrices, particularly the similarity matrix $\Omega$, are treated as edge weights in mathematical graph matching. As demonstrated in the proof of Proposition \ref{pro1}, with $\Lambda^s\Omega(\Lambda^t)^\top = \Omega$ and the reconstruction defined by $Y = \Lambda\Lambda^\top$, we confirm that each entity corresponds to a match in another graph, indicating isomorphic subgraphs across the unified space.

\subsection{Proof of Proposition \ref{pro3}}
\label{ProofofProposition3}
We aim to prove the proposition by satisfying the assumptions of \textit{Theorem 5.1}  \cite{saad2011numerical}. Although the theorem establishes convergence to the Schur vectors $[q_1,\ldots, q_k]$ associated with $\lambda_1, . . . , \lambda_k$, it's important to note that for symmetric matrices, these Schur vectors coincide with the $k$ orthogonal eigenvectors.

Let $\Omega=[\Omega_1,\ldots,\Omega_k]$. Our objective is to show that 
\begin{equation}
    rank(P_i[\Omega_1,\ldots,\Omega_k]) = i
\end{equation}
where $1 \leq i \leq k$, and $P_i$ represents the spectral projector associated with eigenvalues $\lambda_1, \ldots, \lambda_i$. Given the distinctness and simplicity of the eigenvalues (by assumption) and the symmetry of the operator (implying the equivalence of left and right eigenvectors), we can express the spectral projector as:
\begin{equation}
P_i = \sum_{j=1}^i q_jq_j^\top
\end{equation}
Here, we assume without loss of generality that $q_j$ has a unitary norm. Equivalently, we can write $P_i = Q_iQ_i^\top$, where $Q_i = [q_1,\ldots , q_i]$. Consequently, we need to demonstrate that
\begin{equation}
    rank(Q_iQ_i^\top[\Omega_1,\ldots,\Omega_i]) = i
\end{equation}
where $1 \leq i \leq k$. This occurs with probability 1 according to Proposition \ref{pro2}, as shown earlier.

\section{Datasets}
\label{DatasetsEva}

In our experimental evaluation, we employed six datasets curated within the OpenEA framework  \cite{DBP2}. These datasets encompass two cross-lingual settings, namely English-to-French and English-to-German, extracted from the multi-lingual DBpedia. Additionally, we considered two mono-lingual settings derived from well-established knowledge bases, specifically DBpedia-to-Wikidata and DBpedia-to-YAGO. A comprehensive summary of the dataset statistics is provided in Table \ref{Datasets}.

\section{Time complexity}
\label{complexity}
To streamline our analysis, we assume both graphs have 'n' nodes.
The propagation process involves two main steps: propagation computation and matrix factorization. The computation of propagation is a pivotal step. Its time complexity is bounded by $\mathcal{O}(m \cdot n \cdot L_1)$, where $m$ represents the number of edges, and $L_1$ signifies the count of propagation iterations. In the matrix factorization step, extensive calculations are performed on a dense matrix of size $\mathcal{O}(n^2)$. Consequently, the time complexity is $\mathcal{O}(m \cdot n \cdot L_1)$, mainly due to the computational cost of the former operation. To mitigate this, we introduce an efficient generalized push algorithm, which reduces the complexity to $\mathcal{O}\left(\frac{S}{\delta}\right)$, with $\delta$ controlling the trade-off between computational expense and proximity matrix sparsity.

The refinement process consists of $L_2$' iterations. In each iteration, our method computes the left and right multiplication of a dense $n\times n$ matching matrix with two adjacency matrices from graphs, each with an average degree denoted as $\overline{d_1},\overline{d_2}$, respectively. Consequently, the time complexity for this update step is $O(n^2(\overline{d_1}+\overline{d_2}))$. Normalizing the matrix in each iteration and incorporating token match scores requires $O(n^2)$ time. Therefore, the overall time complexity stands at $O(L_2n^2(\overline{d_1}+\overline{d_2}))$.

\section{Hyper-parameters}
\label{Hyper-parameters}
Our experimental configuration is based on the default optimal settings of various hyperparameters. Specifically, we focus on the following hyperparameters: the number of epochs ($eps$), learning rates ($lr$), batch sizes ($b$), negative sampling rates ($\eta$), dropout rates ($dr$), and the number of GNN layers ($L$). Our primary focus during experimentation involves a grid search over embedding dimensions ($d$) and margins ($\gamma$). Additionally, we maintain consistency with the original papers when it comes to the number of multi-head attention mechanisms used in models like MRAEA, RREA, and Dual-AMN, which are fixed at two. To conduct our experiments in a rigorous manner, we vary the embedding dimension ($d$) within the range of $(50, 100, 128, 300)$ and the margin ($\gamma$) within the range of $(0, 0.5, 1.0, 1.5, 2.0, 2.5, 3.0, 10.0, 15.0, 20.0)$. It is essential to note that, for the sake of fair comparison, we exclude attributes when employing the GCN-Align method. A comprehensive summary of all hyperparameters employed in our experiments is presented in Table~\ref{tb:hp}.

\end{document}